\documentclass[sigconf,nonacm]{acmart}

\renewcommand\footnotetextcopyrightpermission[1]{}

\usepackage{graphicx}
\usepackage{float} 
\usepackage{subcaption}
\usepackage{caption}
\usepackage[titlenumbered,ruled,linesnumbered,noend]{algorithm2e}
\usepackage{bm, bbm, mathtools, booktabs}
\usepackage{amsthm}
\usepackage[normalem]{ulem}
\usepackage{enumitem}
\setlist[itemize]{leftmargin=2em}

\usepackage{framed}
\usepackage{tcolorbox}
\usepackage{tabularx}
\usepackage{hyperref}

\newcommand{\kw}[1]{{\ensuremath {\mathsf{#1}}}\xspace}

\newcommand{\etitle}[1]{\vspace{0.3ex}\noindent{\em\underline{#1}}}
\newcommand{\stitle}[1]{\vspace{1.2ex}\noindent{\bf #1}}

\setlist{leftmargin=5.5mm}

\tcbset{
    myboxstyle/.style={
        colback=gray!9,
        colframe=black!70,
        boxrule=1.3pt,
        width=\linewidth,
        arc=2mm,
        auto outer arc,
        boxsep=.5mm,
        left=1mm,
        right=1mm,
        top=.5mm,
        bottom=.5mm
    }
}

\newcommand{\eop}{\hspace*{\fill}\mbox{$\Box$}}     
\newcounter{example}
\renewcommand{\theexample}{\arabic{example}}
\newenvironment{example}{
        \vspace{1.2ex}
        \refstepcounter{example}
        {\noindent\bf Example \theexample:}}{
        \eop\vspace{1.2ex}}

\newcommand{\nthesection}{\arabic{section}}
\newcounter{theorem}
\renewcommand{\thetheorem}{\arabic{theorem}}
\newcounter{prop}
\renewcommand{\theprop}{\arabic{theorem}}
\newcounter{lemma}
\renewcommand{\thelemma}{\arabic{theorem}}
\newcounter{cor}
\renewcommand{\thecor}{\arabic{theorem}}
\newenvironment{theorem}{\begin{em}
        \refstepcounter{theorem}
        {\vspace{1ex} \noindent\bf  Theorem  \thetheorem:}}{
        \end{em}\eop\vspace{1ex}} 
\newcounter{definition}
\renewcommand{\thedefinition}{\arabic{definition}}
\newenvironment{definition}{
        \vspace{1.5ex}
        \refstepcounter{definition}
        {\noindent\bf Definition {\bf \thedefinition}:}}{\eop\vspace{1.5ex}
}
\newcounter{alg}[section]
\renewcommand{\thealg}{\nthesection.\arabic{alg}}

\newcounter{arule}
\renewcommand{\thearule}{\arabic{arule}}

\newcounter{claim}
\renewcommand{\theclaim}{\arabic{claim}}

\renewenvironment{proof}{
        \vspace{1ex}
        {\noindent\bf Proof:}}{\eop\vspace{1ex}}

\newcommand{\eg}{\emph{e.g.},\xspace} 
\newcommand{\ie}{\emph{i.e.},\xspace} 

\newenvironment{proofS}{
        \vspace{1ex}
        {\noindent\bf Proof sketch:\ }}{\eop\vspace{1ex}}

\newlist{myitemize}{itemize}{3}
\setlist[myitemize,1]{label=$\circ$,leftmargin=2.8ex}
\setlist[myitemize,2]{label=$\bullet$,leftmargin=2.8ex}
\setlist[myitemize,3]{label=$\diamond$, leftmargin=2.2ex}
\newenvironment{mbi}{\begin{myitemize}
        \setlength{\topsep}{0.1ex}\setlength{\itemsep}{0.1ex}\vspace{0.1ex}}
        {\end{myitemize}\vspace{0.1ex}}

\newcommand{\mei}{\end{myitemize}\vspace{0.2ex}}

\newcommand{\eat}[1]{}

\newcommand{\C}{{\mathcal C}}

\newcommand{\RAC}{\kw{RAC}}
\newcommand{\LLM}{\kw{LLM}}

\newcommand{\LRU}{\kw{LRU}}

\newif\ifreport
\reportfalse
\begin{document}

\title{\RAC: Relation-Aware Cache Replacement for Large Language Models}

\author{Yuchong Wu}
\affiliation{%
  \institution{Zhejiang University}
  \city{Hangzhou}
  \country{China}
}
\email{wychong@zju.edu.cn}

\author{Zihuan Xu}
\affiliation{%
  \institution{Shenzhen Institute of Computing Sciences}
  \city{Shenzhen}
  \country{China}
}
\email{xuzihuan@sics.ac.cn}

\author{Wangze Ni}
\affiliation{%
  \institution{Zhejiang University}
  \city{Hangzhou}
  \country{China}
}
\email{niwangze@zju.edu.cn}

\author{Peng Cheng}
\affiliation{%
  \institution{Tongji University}
  \city{Shanghai}
  \country{China}
}
\email{cspcheng@tongji.edu.cn}

\author{Lei Chen}
\affiliation{%
  \institution{Hong Kong University of Science and Technology}
  \city{Hong Kong}
  \country{China}
}
\email{leichen@cse.ust.hk}

\author{Xuemin Lin}
\affiliation{%
  \institution{Shanghai Jiao Tong University}
  \city{Shanghai}
  \country{China}
}
\email{xuemin.lin@sjtu.edu.cn}

\author{Heng Tao Shen}
\affiliation{%
  \institution{Tongji University}
  \city{Shanghai}
  \country{China}
}
\email{shenhengtao@tongji.edu.cn}

\author{Kui Ren}
\affiliation{%
  \institution{Zhejiang University}
  \city{Hangzhou}
  \country{China}
}
\email{kuiren@zju.edu.cn}

\begin{abstract}
The scaling of Large Language Model (\LLM) services faces significant cost and latency challenges, making effective caching under tight capacity crucial. Existing cache replacement policies, from heuristics to learning-based methods, predominantly rely on limited-window statistics such as recency and frequency. We show these signals are not robust for real-world \LLM workloads, which exhibit long reuse distances and sparse local recurrence.

To address these limitations, we propose \emph{Relation-Aware Cache} (\RAC), an online eviction strategy that leverages semantic relations among requests to guide eviction decisions.
\RAC synthesizes two relation-aware signals: (1) \textbf{Topical Prevalence}, which aggregates access evidence at the topic level to capture long-horizon reuse; and (2) \textbf{Structural Importance}, which leverages local intra-topic dependency structure to discriminate entries by their future reuse value. Extensive evaluations show that \RAC maintains high effectiveness across diverse workloads, consistently surpassing state-of-the-art baselines by 20\%--30\% in cache hit ratio.
\end{abstract}

\maketitle

\section{Introduction}
\label{sec:introduction}
Large language model (\LLM) services have been widely deployed at scale, operating under growing request concurrency and context lengths~\cite{openai2022chatgpt,anthropic2023claude}. 
To reduce the cost of transformer-based inference, recent systems increasingly rely on {\em caching} to reuse previously generated content (\ie~semantic cache)~\cite{bang2023gptcache,li2024scalm,gillmeancache} or intermediate states (\ie~\kw{KV} cache)~\cite{lin2025ragcache,cacheblend2024}. 
However, real-world deployments operate under tight cache budgets, making trace-driven admission and eviction decisions a central challenge for maximizing future reuse~\cite{yang2001kddcache}.

Existing \LLM\ cache replacement methods mainly fall into two categories.
(1) \emph{Traditional policies}~\cite{mattson1970evaluation, lee2001lrfu, megiddo2003arc, Cao1997GDS, Cherkasova1998GDSF, Beckmann2018LHD, liu2023s3fifo, Zhang2023Sieve, zhang2023lfru, hu2024qdfifo} 
compute an eviction score from short-window statistics (\eg recency, frequency, or their combinations) and evict the entry with the minimum score under capacity pressure.
(2) \emph{Learning-based methods}~\cite{song2020lrbelady,liu2020ilcache,Vietri18,Rodriguez21,song2023halp,yang2023lrb} 
learn a predictor from historical access traces to estimate future reuse (\eg reuse within a horizon or next access distance) and evict entries according to the predictions.
\looseness=-1

However, recent studies show that real \LLM workloads exhibit long reuse distances and sparse local recurrence~\cite{yu2025smartcache}.
As a result, most cache entries are accessed only once or reappear after long gaps, leaving little reuse signal within short observation windows.
Under such workloads, traditional policies that rely on short-range statistics become ineffective, as recency and frequency no longer correlate with future reuse.
Learning-based methods, constrained by finite prediction horizons, similarly fail to capture reuse events beyond their observable scope.
We next use a concrete example to illustrate how such failures arise in practice.

\begin{table}[t]
\centering
\footnotesize
\setlength{\tabcolsep}{6pt}
\renewcommand{\arraystretch}{1.1}
\begin{tabular}{c p{6.1cm}}
\toprule
\textbf{Items} & \multicolumn{1}{c}{\textbf{Request semantics}} \\
\midrule
\parbox[t]{1.3cm}{\centering
(\textbf{context})\\
$a_0$\\[2pt]
(queries)\\
$a_1$--$a_5$
}
&
\parbox[t]{6.1cm}{%
A code-centric request sequence.
$a_0$ (\textbf{context}) asks the \LLM to read and explain a given code snippet.
$a_1$ queries the functionality of a specific function;
$a_2$ queries variable roles and data flow;
$a_3$ queries invariants or assumptions;
$a_4$ queries potential corner cases;
and $a_5$ queries possible optimizations.%
}
\\
\hline
\parbox[t]{1.3cm}{\centering
(queries)\\
$a_1^{*}$--$a_5^{*}$
} 
&
\parbox[t]{6.1cm}{
A set of follow-up queries reusing the code context in $a_0$,
where $a_1^*$--$a_5^*$ target different functions, variables, assumptions,
corner cases, and optimizations of the same code. 
}
\\
\hline
\parbox[t]{1.3cm}{\centering
(\textbf{context})\\
$b_2$\\[2pt]
(queries)\\
$b_0$, $b_1$, $b_3$--$b_5$
}
&
\parbox[t]{6.1cm}{
A writing task sequence.
$b_0$ asks the \LLM to draft a short text on a given topic.
$b_1$ asks for revision suggestions.
$b_2$ (\textbf{context}) specifies explicit writing constraints
(\eg tone and formatting rules).
$b_3$ asks to rewrite the text under the constraints;
$b_4$ asks to condense it;
and $b_5$ asks to finalize the text accordingly.
}
\\
\hline
\parbox[t]{1.3cm}{\centering
(queries)\\
$b_0^{*}, b_1^{*}$\\
$b_3^{*}$--$b_5^{*}$
}
&
\parbox[t]{6.1cm}{%
A repeated writing task that reuses the same constraints in $b_2$.
$b_0^{*}$ and $b_1^{*}$ draft a new text on a different topic, while
$b_3^{*}$--$b_5^{*}$ request revision, rewriting, condensation,
and finalization under the same constraints specified in $b_2$.
}
\\
\bottomrule
\end{tabular}
\caption{Example items and their request semantics.}
\label{tab:example_prompts}
\end{table}

\begin{figure}[t]
  \centering
  \includegraphics[width=\columnwidth]{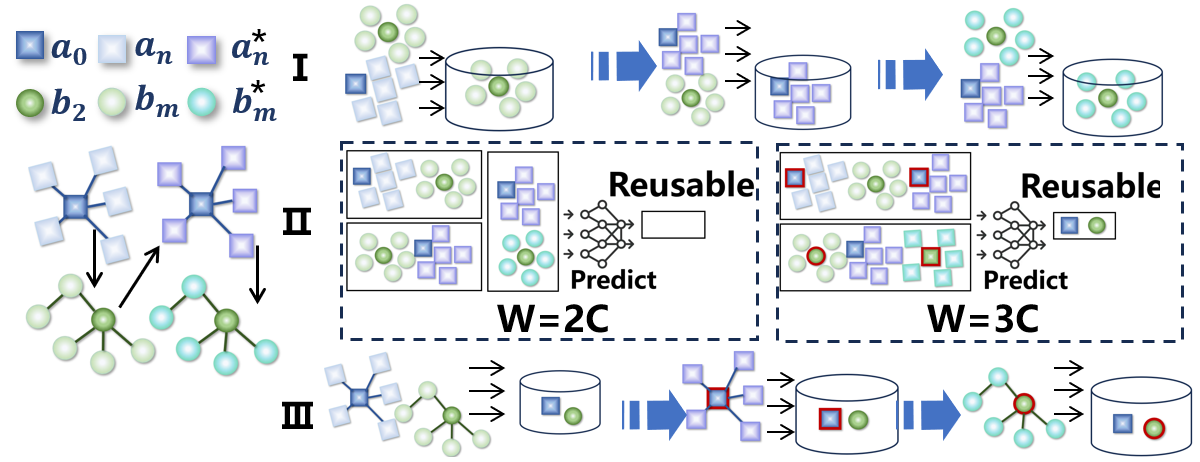}
  \caption{Demonstration of traditional, learning-based, and offline-optimal policies on Example~1.}
  \label{fig:example1}
\end{figure}

\begin{example}
\label{ex:core_branch_batch}
Consider the \LLM request sequence
$\{a_0\!\sim\! a_5\}
\!\rightarrow\! \{b_0\!\sim\! b_5\}
\!\rightarrow\! \{a_0, a_1^{*}\!\sim\! a_5^{*}\}
\!\rightarrow\! \{b_0^*\!\sim\! b_5^{*}\}$,
where each item corresponds to a query listed in Table~\ref{tab:example_prompts}, and the sequence alternates between two topics (\ie A: coding and B: writing), each forming a coherent task in which context-setting requests (\ie $a_0$ and $b_2$) are followed by related queries.
Figure~\ref{fig:example1} illustrates the semantic structure among requests induced by semantic relatedness. Assume a cache of size $|\C|=6$; we compare the outcomes of different cache strategies.

\etitle{Traditional policies}.
As shown in Fig.~1(I), taking \LRU~\cite{mattson1970evaluation} as an example, each batch of semantically related requests fills the cache.
The arrival of the next batch evicts all resident entries before any reuse can occur, resulting in zero cache hits.

\etitle{Online learning policies}.
Taking \kw{LRB}~\cite{song2020lrbelady} as an example, at cold start (Fig.~1(I)) or with a small training window (Fig.~1(II-left)), no reuse is observed and the policy degenerates to traditional behavior.
Only with a sufficiently large window (Fig.~1(II-right)) can reuse be learned, at the cost of significantly higher overhead  (especially for semantic caches, where hit determination itself requires costly similarity computation).

\etitle{Offline optimal}.
As shown in Fig.~1(III), the offline optimal policy preserves structurally central entries while evicting peripheral ones.
Across topic switches, core requests (\eg $a_0$ and $b_2$) are retained and repeatedly reused, while peripheral entries are trimmed, maximizing reuse by preserving structurally central items.
\end{example}

The example above shows that under tight cache capacity, many entries are evicted before they are reused but offline optimal shows it can be reused.
As a result, short-window signals such as recency and access frequency provide weak indications of future reuse.
Nevertheless, we observe that reuse in \LLM can depend on request relations such as topical recurrence across episodes and prerequisite dependencies within a topic.

These observations raise three questions: (1) what relations among requests are observable \emph{online} with low overhead; (2) how to convert such relations into an eviction signal beyond entry-local recency/frequency; and (3) how to implement the resulting policy under a hard capacity constraint.

To address them, we (1) operationalize the two patterns as two online eviction signals: \textbf{Topical Prevalence (TP)} aggregates long-horizon semantic recurrence at the topic level, while \textbf{Topic Structural Importance (TSI)} prioritizes within-topic context anchors induced by prerequisite dependency; (2) integrate TP and TSI into a unified eviction value via a lightweight heuristic; and (3) implement the resulting policy under a hard capacity budget.
Both signals are lightweight to maintain online, yield relation-derived cues beyond recency/frequency, and are directly actionable under hard-capacity replacement.

\stitle{Contributions \& Organization.}
We propose Relation-Aware Cache (\RAC), an online cache eviction strategy that uses relation-aware values and provides an end-to-end pipeline from relation capture to eviction decisions.

\etitle{Relation-aware eviction rule. (Section~\ref{sec:basic-idea})}
We formalize semantic cache replacement under a capacity constraint and show why entry-local short-window signals are insufficient when reuse is delayed.
Based on this model, we introduce two relation-aware values and combine them into a single eviction score, yielding an online rule that specifies both cache updates and eviction under pressure.

\etitle{Computation of Topical Prevalence. (Section~\ref{sec:component-A})} 
We design an online mechanism to estimate topical prevalence by organizing requests into topic clusters and aggregating access evidence at the topic level.
The resulting topic score summarizes long-horizon reuse signals shared by entries in the same topic, providing information that is missing from per-entry recency/frequency.

\etitle{Computation of Structural Importance. (Section~\ref{sec:component-B})} 
We design an online module to estimate structural importance using a lightweight identifier of local dependency structure.
It assigns higher value to prerequisite entries that support more downstream requests, complementing purely statistical signals when direct repeats are sparse.

\etitle{Experimental study. (Section~\ref{sec:evaluation})}
We evaluate \RAC on real-world dialogue traces and on synthetic workloads derived from the modeled process.
On real traces, \RAC improves \textbf{hit ratio} by \textbf{20\% on average} over representative frequency-/recency-based baselines; on model-driven synthetic workloads, it achieves an \textbf{average 30\% gain} in regimes where relation-aware aggregation is most beneficial.

We introduce the problem setting in Section~\ref{sec:preliminary}, discuss related work in Sections~\ref{sec:related-work}, and conclude in Section~\ref{sec:conclusion}.

\section{Preliminary and Problem Formulation}
\label{sec:preliminary}

We first define the workload and cache abstractions and then formalize the online cache admission and eviction problem.

\stitle{Preliminary}. We specify (1) a topic-aware workload model for dialogue queries and (2) a cache model that maps queries to cache entries and defines cache hits.

\etitle{Topic.}
Following \cite{arguello2006topicseg,groszsidner1986discourse}, we define a \emph{topic} as a dialogue segment with a relatively stable semantic focus (\ie a consistent \emph{discourse segment purpose}, DSP). We model the dialogue as a query sequence indexed by time steps $t=1,2,\ldots$. Each query $q_t$ is assigned a topic label $Z_t\in\{1,\ldots,S\}$, where $S$ is the number of distinct topics; when $Z_t=s$, query $q_t$ belongs to topic $s$. A topic $s$ may appear in multiple \emph{topic episodes}, where each episode is a maximal contiguous time interval during which $Z_t=s$. Episodes of the same topic can be separated by other topics, allowing topic $s$ to reappear later.

\etitle{Topic sequence as a semi-Markov process.}
Given the topic labels $\{Z_t\}$, we obtain an episode-level sequence by collapsing each topic episode into a single state visit. Each visit is represented by (1) the topic identity and (2) its episode length (the number of consecutive queries in that episode). For example, if the topic-label sequence is $\{A,A,B,B,B,A\}$, then the episode-level sequence is $(A,2)\!\to\!(B,3)\!\to\!(A,1)$.
We model this episode-level sequence as a semi-Markov process over topics, where:
(1) transitions occur only at episode boundaries;
(2) the sojourn time in a state equals the episode length; and
(3) the next topic depends only on the current topic (\ie a first-order Markov property for the embedded jump chain).\looseness=-1

\etitle{Intra-topic query dependency.}
Following \cite{li2020molweni,li2014discourse}, we assume that queries within a topic episode exhibit an implicit dependency structure. For each topic $s\in\{1,\ldots,S\}$ and any episode of topic $s$, we represent the intra-episode dependencies as a set of time-respecting \emph{discourse dependency links} $\mathcal{E}_s$. Each link $(i,j)\in\mathcal{E}_s$ with $i<j$ indicates that query $q_j$ depends on an earlier query $q_i$ for context (\eg co-reference resolution, clarification, or continuation). Therefore, queries in an episode induce a directed acyclic graph (DAG) that captures the flow of context. Unless otherwise specified, we define and use these dependencies only within the same topic episode.

\etitle{Cache.}
We next define the cache model used in this paper, including (1) the cache entry abstraction, (2) the cache store $\C$ and its capacity, and (3) the cache hit criterion.
\mbi
\item \textbf{Entry.} A cache entry $e$ is the atomic object managed by the cache. Each entry has a semantic embedding, and we use a similarity function $\kw{sim}(\cdot,\cdot)$ (\eg cosine similarity) to measure proximity between embeddings. A user query $q_t$ may map to one entry (query-level caching) or multiple entries (chunk-level caching).

\item \textbf{Store.} We denote the cache as a set of entries $\C=\{e_1,\dots,e_m\}$ with capacity $|\C|\le C$. Depending on the cache type, an entry may store reusable content in one of the following forms: (a) semantic content (\eg passages, past responses, summaries, or prompt patches~\cite{bang2023gptcache}), (b) \kw{KV} states for prefill reuse~\cite{yang2024memory3,yang2025kvshare}, or (c) a hybrid payload that jointly manages text and \kw{KV} states~\cite{li2025memos}. Each entry also maintains lightweight intrinsic metadata (\eg recency and frequency).

\item \textbf{Hits.} A query $q_i$ is a \emph{hit} on entry $e_j$ if it satisfies the system-defined equivalence condition, typically via one of the following: (a) \emph{semantic equivalence}, where $\kw{sim}(q_i,e_j)\ge \tau$ for a predefined threshold $\tau$; or (b) \emph{content equivalence}, where the cached payload in $e_j$ matches the required content or aligns with the query context (\eg prefix alignment in \kw{KV} caches). Otherwise, $q_i$ is a miss.\looseness=-1
\mei

\vspace{-1.5ex}
\stitle{Problem.} We are now ready to formalize the \emph{online cache admission and eviction problem} for maximizing cache hits.
\mbi
\item \emph{Input}: A time-ordered \LLM\ query stream $Q\!=\!\{q_1,q_2,\ldots\}$; a cache of capacity $C$ that is initially empty; and the cache hit criterion.\looseness=-1
\item \emph{Output}: An online caching policy $\pi$ that decides, upon each request arrival, (1) whether to admit the corresponding entry into the cache and (2) which entries to evict when the cache is full.
\item \emph{Objective}: Maximize the total number of cache hits over $Q$ subject to the capacity constraint $C$.
\mei
\vspace{-0.5ex}

\etitle{Remark.} Observe the following.
(1) This problem is fully online: $\pi$ makes decisions without future knowledge and can only use information available up to the current time step.
(2) The hit criterion is system-defined, and our formulation is agnostic to the specific cache type.
In other words, our method can be instantiated across mainstream caching settings today, such as embedding-based semantic equivalence for semantic caching ~\cite{bang2023gptcache} and compositional content equivalence for \kw{KV} caching in LLM serving. \cite{Agarwal2025CacheCraft, lin2025ragcache, cacheblend2024}
(3) The topic-aware workload abstractions (topics, episodes, and intra-episode dependencies) specify additional structure in $Q$ that our policy can exploit, but they do not change the underlying objective of maximizing cache hits.

\section{Design of Relation-Aware Caching}
\label{sec:topic-value}

This section designs the relation-aware cache value used by the online admission and eviction policy. Section~\ref{sec:basic-idea} introduces the basic idea and defines the entry value as the product of two online signals, where (1) topical prevalence captures topic-level activeness over time and (2) structural importance captures how critical an entry is as a context anchor within a topic episode. Section~\ref{sec:component-A} and Section~\ref{sec:component-B} detail the online computation and maintenance of topical prevalence and structural importance, respectively.

\subsection{Basic Idea}
\label{sec:basic-idea}

As in Section~\ref{sec:preliminary}, we model the workload as a query sequence $\{q_t\}$ with topic labels $\{Z_t\}$ ~\cite{jelenkovic2009asymptotic}.
We further interpret the topic-label process at episode granularity as a semi-Markov process.

\etitle{Topic occupancy and query generation.}
For each topic $s\in\{1,\ldots,S\}$, let (1) $\pi_s$ denote its long-run occupancy probability, \ie the steady-state fraction of time steps with topic label $Z_t=s$, and (2) $p(q\mid s)$ denote the conditional probability of observing the concrete query content $q$ when the topic label is $s$.
By the law of total probability, the marginal probability of observing a specific query $q_t$ at time $t$ is\looseness=-1
\[
p(q_t)\;\triangleq\;\sum_{s=1}^{S}\pi_s\,p(q_t\mid s).
\]

Equivalently, isolating the in-topic component for the realized topic label $Z_t$ yields
\[
p(q_t)
=
\pi_{Z_t}p\big(q_t\mid Z_t\big)
+
\sum_{s\neq Z_t}\pi_s\,p(q_t\mid s).
\]
Due to topic locality, we typically have $p(q_t\mid Z_t)\gg p(q_t\mid s)$ for $s\neq Z_t$, so the in-topic term $\pi_{Z_t}p\big(q_t\mid Z_t\big)$ dominates the mixture.

\etitle{Relation-aware value.}
\noindent
We design a heuristic value to estimate the factorized term $\pi_{Z_t} p(q\mid Z_t)$ by tracking its two factors with observable online signals:
\mbi
  \item \textbf{Topical Prevalence (\kw{TP}).} A per-topic temporal signal $\kw{TP}(s)$ tracks the topic weight $\pi_s$ via a lightweight decay-and-accumulate update on topic hits (see details in Section~\ref{sec:component-A}).
  \item \textbf{Topic Structural Importance (\kw{TSI}).} A per-item topological signal $\kw{TSI}(q)$ proxies the in-topic strength $p(q\mid s)$ by tracking the dependency in-degree (\ie dependency count) of item $q$ within its topic episode (see details in Section~\ref{sec:component-B}).
\mei

Consequently, to approximate the dominant in-topic component $\pi_Z p(q\mid Z)$ using these online statistics, we define the unified heuristic value for a query $q$ associated with topic $Z$ as

\begin{equation}
\kw{Value}(q) = \kw{TP}(Z)\cdot \kw{TSI}(q).
\end{equation}

\etitle{Online flow.}
With the relation-aware value $\kw{Value}(\cdot)$ in place, we next design a complete cache management workflow to make it actionable online. The workflow integrates two computation components that maintain the required online signals, and specifies how these signals are refreshed and used to guide eviction (Algorithm~\ref{alg:tp_tsi_mainflow}). Upon the arrival of request $q_t$, we:
(1) refresh $\kw{TP}(Z_t)$;
(2) update $\kw{TSI}(q)$ for items $q$ that are structurally related to $q_t$ within the current topic episode; and
(3) insert the corresponding cache entry into $\C$ and, if $|\C|>C$, evict the entry whose associated item $q$ has the smallest $\kw{Value}(q)$.

\begin{algorithm}[t]
\caption{Cache-Side Main Workflow}
\label{alg:tp_tsi_mainflow}
\KwIn{incoming request $q_t$; cache $\C$; current time step $t$}
\KwOut{updated cache $\C$}
\SetKwProg{Fn}{Procedure}{}{end}

\Fn{\textsc{OnArrive}($q_t,\C,t$)}{
  $Z_t,\{\kw{TP}(s)\}\leftarrow \textsc{UpdateTP}(q_t,\C,t)$\tcp*[r]{Alg.~\ref{alg:tp_mainflow}}
  $\{\kw{TSI}(q)\}\leftarrow \textsc{UpdateTSI}(q_t,\C,t)$\tcp*[r]{Alg.~\ref{alg:tis-update-lite}}

  insert the cache entries $\bar{e}_t$ of $q_t$ (with label $Z_t$) into $\C$\;
  \While{$|\C|>C$}{
    evict entries $\bar{e}_i\subset\C$ (for query $q_i$ with label $Z_i$) with the minimum $\kw{TP}(Z_i)\cdot \kw{TSI}(q_i)$\;
  }
  \KwRet{$\C$}\;
}
\end{algorithm}

\subsection{Computation of Topical Prevalence.}
\label{sec:component-A}

As a key component of RAC and the first step in its online workflow, we maintain a topic-level \emph{topical prevalence} signal as an online surrogate for the occupancy $\pi_s$, and use it to guide cache admission and eviction. For each topic $s$, we maintain a score $\kw{TP}(s)$ that summarizes how \emph{active} $s$ is at the current time, combining both hit frequency and recency via exponential decay. This score is evaluated during eviction to compare entries across topics.

\begin{definition}[Topical prevalence ($\kw{TP}$)]
\label{def:tp}
For a topic $s$, let $\mathcal{H}_t(s)\triangleq\{i\le t: Z_i=s\}$ denote its hit times up to time $t$.
Given a decay coefficient $\alpha\ge 0$, we define
\[
\kw{TP}_t(s)\;\triangleq\;\sum_{i\in\mathcal{H}_t(s)}\Big(\tfrac{1}{2}\Big)^{\alpha\,(t-i)}.
\vspace{-2ex}\]
\end{definition}

Intuitively, hits are exponentially down-weighted by age, such that $\kw{TP}_t(s)$ captures both recency and frequency.

\etitle{Online maintenance for $\kw{TP}$.}
Directly maintaining $\kw{TP}$ in Definition~\ref{def:tp} requires iterating over all hit times $i\in\mathcal{H}_t(s)$.
To enable $O(1)$ updates, we cache the value at the latest hit.
Specifically, let $t_{\kw{last}}(s)\triangleq \max\mathcal{H}_t(s)$ denote the most recent hit time of topic $s$ (with $t_{\kw{last}}(s)=0$ if $\mathcal{H}_t(s)=\emptyset$), and let $\kw{TP}_{\kw{last}}(s)$ denote the stored $\kw{TP}$ value right after processing the request at time $t_{\kw{last}}(s)$.
For any time $t\ge t_{\kw{last}}(s)$, we have
\[
\kw{TP}_t(s)
=
\Big(\tfrac{1}{2}\Big)^{\alpha\,(t-t_{\kw{last}}(s))}\cdot \kw{TP}_{\kw{last}}(s).
\]

Therefore, we store only two per-topic scalars: $t_{\kw{last}}(s)$ and $\kw{TP}_{\kw{last}}(s)$.
On a hit to $s$, we update $\kw{TP}_{\kw{last}}(s)\leftarrow \kw{TP}_t(s)$ and set $t_{\kw{last}}(s)\leftarrow t$.
During eviction, we compute $\kw{TP}_t(s)$ on demand using the closed form above.
We next describe how the cache performs topic routing and triggers the above $\kw{TP}$ refresh online.

\etitle{Cache-side topic routing and $\kw{TP}$ refresh.}
Algorithm~\ref{alg:tp_mainflow} specifies how the cache assigns each request $q_t$ to a topic and refreshes the corresponding $\kw{TP}$ state.
We maintain a cache-side topic index: each topic $s$ stores (1) a representative embedding $r(s)$ for routing and (2) a list of resident entries currently assigned to $s$.
On arrival of $q_t$, we:
(1) compute its similarity to each $r(s)$ and keep topics within the similarity threshold $\tau$;
(2) if no candidate exists, create a new topic label $Z_t$ and initialize its states;
(3) otherwise set $Z_t$ to the most similar candidate;
(4) treat $Z_t$ as a hit and refresh $\kw{TP}_{\kw{last}}(Z_t)$ by decay-and-increment, and set $t_{\kw{last}}(Z_t)\leftarrow t$.
We detail how to update representatives and maintain membership under eviction in Appendix~\ref{app:topic_mgmt}.\looseness=-1

\begin{algorithm}[t]
\caption{Cache-side Topic Routing and $\kw{TP}$ Refresh}
\label{alg:tp_mainflow}
\KwIn{incoming request $q_t$; cache $\C$; current time step $t$; similarity threshold $\tau$; decay coefficient $\alpha$}
\KwData{states $\{t_{\kw{last}}(s)\}$, $\{\kw{TP}_{\kw{last}}(s)\}$, and representative embedding $\{r(s)\}$ for each appeared topic $s$}
\KwOut{assigned topic label $Z_t$ of $q_t$, and $\{\kw{TP}(s)\}$ for each topic $s$ that appears in $\C$}
\SetKwProg{Fn}{Procedure}{}{end}
\Fn{\textsc{UpdateTP}($q_t,\C,t$)}{
  $Z_t \leftarrow \kw{SearchTopic}(q_t, \tau, \{r(s)\})$\;
  \If{$Z_t=\emptyset$}{
    create a new topic label $Z_t$ and assign $Z_t$ to $q_t$\;
    initialize $t_{\kw{last}}(Z_t)\leftarrow t$, $\kw{TP}_{\kw{last}}(Z_t)\leftarrow 0$, and $r(Z_t)\leftarrow$ the embedding of $q_t$\;
  }

  $\kw{TP}_{\kw{last}}(Z_t)
  \leftarrow
  \big(\tfrac{1}{2}\big)^{\alpha\cdot\big(t-t_{\kw{last}}(Z_t)\big)}
  \cdot \kw{TP}_{\kw{last}}(Z_t) + 1$\;
  $t_{\kw{last}}(Z_t) \leftarrow t$\;

  evaluate $\kw{TP}(s)=\big(\tfrac{1}{2}\big)^{\alpha\cdot\big(t-t_{\kw{last}}(s)\big)}\cdot \kw{TP}_{\kw{last}}(s)$ lazily for topics $s$ that appear in $\C$ during eviction process\;
  \KwRet{$Z_t$ and $\{\kw{TP}(s)\}$}\;
}
\end{algorithm}

\subsection{Computation of Structural Importance.}
\label{sec:component-B}

We next introduce an item-level signal to prioritize entries within the same topic.
Within each episode of topic $s$, queries are connected by dependency links $\mathcal{E}_s$ as defined in Section~\ref{sec:preliminary}.
Intuitively, a query $q$ is important for two reasons: (1) it is requested multiple times within topic $s$, and (2) it provides context required by downstream queries through $\mathcal{E}_s$.
We capture these two effects by using a topic structural importance score $\kw{TSI}(q)$ to guide within-topic eviction, and we formally define $\kw{TSI}(q)$ as follows.

\begin{definition}[Topic Structural Importance (\kw{TSI})]
\label{def:tsi}
For a query $q$ routed to topic $s$, we define
\[
\kw{TSI}(q)\;\triangleq\;\kw{freq}(q)\;+\;\lambda\,\kw{dep}(q),
\]
where $\kw{freq}(q)$ is the number of cache hits to $q$ so far in topic $s$, $\lambda\ge 0$ is a weight parameter, and $\kw{dep}(q)$ is the downstream hit mass induced by dependency links $\mathcal{E}_s$:
\[
\kw{dep}(q_k)\;\triangleq\;\sum_{(q_k,q_j)\in\mathcal{E}_s}\kw{freq}(q_j).
\]
\end{definition}

Intuitively, $\kw{dep}(q_k)$ aggregates the request mass of downstream queries that rely on $q_k$ through $\mathcal{E}_s$.
Therefore, evicting $q_k$ can trigger additional misses on every future access to those dependent queries.
The following theorem formalizes this monotonic relation between eviction cost and $\kw{dep}(q_k)$.

\begin{theorem}
\label{thm:dep_miss}
Within a topic, evicting a query $q_k$ increases the long-run miss probability by an amount that is monotonically increasing in $\kw{dep}(q_k)$.
\end{theorem}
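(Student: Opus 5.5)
We formalize the statement with a simple stationary model of an episode of topic $s$ and then compute the extra miss mass caused by evicting $q_k$ directly. Queries in the episode are drawn i.i.d. from $p(\cdot\mid s)$. We use the empirical hit counts as plug-in estimates, $p(q_j\mid s)\propto \kw{freq}(q_j)$. We also adopt the dependency semantics behind Definition~\ref{def:tsi}: a downstream query $q_j$ with $(q_k,q_j)\in\mathcal{E}_s$ is served as a hit only if both its own entry and its prerequisite context $q_k$ are resident. This is the content-equivalence criterion of Section~\ref{sec:preliminary}, e.g.\ prefix alignment for \kw{KV} reuse. Under these assumptions, the long-run miss probability of the topic is $1-\sum_{q}\Pr[q\text{ arrives}]\cdot\mathbbm{1}[q\text{ servable}]$, and it depends only on the resident set $\C_s$.

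The key step compares two resident sets: $\C_s$, and $\C_s\setminus\{q_k\}$ with $q_k$ replaced by an arbitrary non-prerequisite entry, so that capacity is held fixed. Removing $q_k$ turns two kinds of future requests from hits into misses:
\begin{itemize}
\item direct requests to $q_k$, with mass proportional to $\kw{freq}(q_k)$;
\item requests to every child $q_j$ with $(q_k,q_j)\in\mathcal{E}_s$ that were servable before, with mass $\sum_{(q_k,q_j)\in\mathcal{E}_s}p(q_j\mid s)\propto \kw{dep}(q_k)$.
\end{itemize}
Because the links are edges of a DAG and each child is counted once per link, the increase in miss probability is
\[
\Delta(q_k)=\frac{1}{N_s}\Big(\kw{freq}(q_k)+\beta\,\kw{dep}(q_k)\Big)-\Delta_0 ,
\]
where $N_s$ is the topic's total hit mass and $\beta\in(0,1]$ is the fraction of dependent children whose own entries are resident. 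The term $\Delta_0$ is the hit mass gained by the replacement entry, and it does not depend on $q_k$. With $\kw{freq}(q_k)$ fixed, $\Delta(q_k)$ is therefore affine in $\kw{dep}(q_k)$ with nonnegative slope $\beta/N_s$, hence monotonically nondecreasing, and strictly increasing whenever $\beta>0$. Taking $\beta$ as the analogue of $\lambda$ also justifies the form of $\kw{TSI}$.

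To obtain a genuinely long-run statement, I would then lift this per-episode computation to the semi-Markov topic process. Weighting by the occupancy $\pi_s$ multiplies $\Delta$ by a positive constant and does not affect monotonicity. If the evicted entry would be re-admitted at its next direct request, the loss lasts only until that request, but the expected inter-request gap is a function of $p(q_k\mid s)$ alone, so the coefficient of $\kw{dep}$ stays nonnegative.

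I expect the main obstacle to be the dependency semantics. The statement is vacuous unless a child query's hit genuinely requires its parent to be resident, so this assumption has to be stated explicitly. A second difficulty is multi-parent children: if $q_j$ can be served through either of two parents, evicting $q_k$ does not always cause the miss. I would handle this by requiring all parents, an AND-semantics under which the bound is exact. Under OR-semantics I would weaken the claim to an upper-bound monotonicity.
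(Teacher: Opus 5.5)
Your route differs from the paper's, and the step that turns your loss computation into a statement about $\kw{dep}(q_k)$ does not go through. Under your AND-semantics, a dependent $q_j$ that is not itself resident was already a miss. Evicting $q_k$ therefore only converts requests to dependents that were servable, and in particular resident. The exact loss is
\[
\Delta(q_k)=\frac{1}{N_s}\Big(\kw{freq}(q_k)+\sum_{(q_k,q_j)\in\mathcal{E}_s,\ q_j\ \text{servable in}\ \C_s}\kw{freq}(q_j)\Big)-\Delta_0 ,
\]
which depends on the \emph{resident} part of the downstream mass. It does not depend on $\kw{dep}(q_k)$, which by Definition~\ref{def:tsi} sums over all links in $\mathcal{E}_s$. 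Rewriting the middle term as $\beta\,\kw{dep}(q_k)$ fails in both readings of $\beta$:
\begin{itemize}
\item for a count-based $\beta$ it is simply wrong when dependents have unequal frequencies;
\item for a mass-weighted $\beta$ it is a tautology in which $\beta$ varies with the candidate and with $\kw{dep}(q_k)$ itself.
\end{itemize}
So ``affine with slope $\beta/N_s$'' yields no monotonicity.

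The claim in fact fails inside your model. Let $q_k$ and $q_{k'}$ have equal $\kw{freq}$. Let $q_k$ have a single dependent with $\kw{freq}=10$ that has since been evicted, and $q_{k'}$ a single resident dependent with $\kw{freq}=1$. Then $\kw{dep}(q_k)=10>1=\kw{dep}(q_{k'})$, yet $\Delta(q_k)<\Delta(q_{k'})$. This is not exotic for \RAC, whose links and $\kw{dep}$ increments persist after a child is evicted. Relatedly, with plug-in probabilities $N_s$ contains the dependents' frequencies, so $N_s$ and $\Delta_0$ are not constants as $\kw{dep}(q_k)$ varies. Raising the frequency of a non-resident dependent raises $\kw{dep}(q_k)$ yet strictly lowers a positive $\Delta(q_k)$.

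The paper avoids this with a different semantics and a counting argument rather than a stationary computation. It stipulates that whenever $q_k$ is absent, \emph{every} request to a one-hop dependent incurs at least one additional miss attributable to $q_k$. This holds regardless of how the other entries, including the dependent itself, are arranged. A per-time-step case analysis over an arbitrary realized in-topic sequence then lower-bounds the eviction-induced miss increase by $\Delta_T(q_k)=\sum_{t=1}^{T}\mathbbm{1}\{Q_t\in\mathcal{N}(q_k)\}$. The paper then reads $\kw{dep}(q_k)$ as the empirical counterpart of $\Delta_T(q_k)$. Under that charging rule your $\beta$ is identically $1$, and neither the i.i.d.\ model nor the capacity-preserving swap is needed; the price is that one obtains only a lower bound.

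To repair your proof, you have three options:
\begin{itemize}
\item adopt that semantics, in which case your computation refines the paper's by exhibiting the direct $\kw{freq}(q_k)$ term and the capacity accounting;
\item keep AND-semantics and restrict to candidates all of whose dependents are resident;
\item keep AND-semantics and restate the result as monotonicity in the servable downstream mass appearing in the display above.
\end{itemize}
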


\proofS
Under the dependency links $\mathcal{E}_s$, any downstream query $q_j$ with $(q_k,q_j)\in\mathcal{E}_s$ treats $q_k$ as a dependency context anchor.
When $q_k$ is absent, each request to such $q_j$ incurs an extra miss attributable to the absence of $q_k$, so the incremental miss cost scales with the aggregate downstream request mass.
Since $\kw{dep}(q_k)=\sum_{(q_k,q_j)\in\mathcal{E}_s}\kw{freq}(q_j)$ explicitly aggregates this mass, the miss increase is monotone in $\kw{dep}(q_k)$.
A complete proof is provided in Appendix~\ref{app:dep_proof}.

To make the structure-induced miss cost usable in an online policy, we need an online mechanism to maintain the dependency-link set $\mathcal{E}_s$ and hence the structural term $\kw{dep}(\cdot)$ in Definition~\ref{def:tsi}.

\etitle{Lightweight dependency link detector.}
To maintain $\mathcal{E}_s$ online, we attach each incoming query $q_t$ to at most one 
\emph{dependency parent} $q_p$ within the current topic episode.
If such a parent exists, we add a directed link $(q_p,q_t)$ to $\mathcal{E}_s$.
This one-parent design enables constant-time updates of the structural term $\kw{dep}(\cdot)$ in Definition~\ref{def:tsi}. Intuitively, the dependency parent should satisfy the requirements that it is
(1) recent enough to reflect the local conversational context, and
(2) semantically related to $q_t$, so that $q_t$ is likely to reuse its information.
Accordingly, \textsc{DetectParent} selects the parent from a small set of \emph{recent resident} predecessors.
Specifically, it scans only cached candidates $q_k\in\C$ within a look-back window $t-k\le T$ and with $\kw{sim}(q_k,q_t)\ge \tau$.

We impose the look-back window $T$ and restrict candidates to resident entries for three reasons.
(1) Dependencies inside an episode are typically local in time, so very old queries are unlikely to supply the context that $q_t$ reuses.
(2) A bounded window limits the candidate set and keeps the per-request detection cost stable.
(3) The parent is meant to be an in-cache context anchor; if a candidate is not currently in $\C$, linking $q_t$ to it neither improves hit probability nor yields a maintainable anchor state for downstream queries.

Therefore, for each candidate $q_k \in \C$, we compute
\[
\kw{score}(k,t)
=\frac{1}{t-k}\cdot \kw{sim}(q_k,q_t).
\]
This score combines: (1) a recency discount $\frac{1}{t-k}$, and (2)
semantic similarity $\kw{sim}(q_k,q_t)$.
Finally, \textsc{DetectParent} returns the Top-$1$ candidate under $\kw{score}(k,t)$ and returns $\emptyset$ if no candidate survives the filters.
The result is cached as $\kw{par}(q_t)$ for future accesses.

\begin{algorithm}[t]
\caption{Constant-time $\kw{TSI}$ Update}
\label{alg:tsi-update-lite}
\KwIn{incoming request $q_t$; cache $\C$; current time step $t$; look-back window $T$; similarity threshold $\tau$; weight $\lambda$}
\KwData{entry states $\kw{freq}(\cdot)$, $\kw{dep}(\cdot)$, $\kw{TSI}(\cdot)$, and a cached parent pointer $\kw{par}(\cdot)$ for each cached query}
\KwOut{updated $\kw{TSI}$ scores for $q_t$ and its dependency parent $q_p$ (if any)}
\SetKwProg{Fn}{Procedure}{}{end}
\Fn{\textsc{UpdateTSI}($q_t, \C, t$)}{
  $\kw{freq}(q_t) \leftarrow \kw{freq}(q_t) + 1$\;
  $\kw{TSI}(q_t) \leftarrow \kw{freq}(q_t) + \lambda \cdot \kw{dep}(q_t)$\;

  \eIf{$\kw{par}(q_t) \neq \emptyset$}{
    $q_p \leftarrow \kw{par}(q_t)$\; $\kw{new} \leftarrow 0$\;
  }{
    $q_p \leftarrow \textsc{DetectParent}(q_t, \C, t, T, \tau)$\;
    $\kw{par}(q_t) \leftarrow q_p$\; $\kw{new} \leftarrow 1$\;
  }

  \If{$q_p \neq \emptyset$ \textbf{\upshape and} $q_p \in \C$}{
    \eIf{$\kw{new}=1$}{
      $\kw{dep}(q_p) \leftarrow \kw{dep}(q_p) + \kw{freq}(q_t)$\;
    }{
      $\kw{dep}(q_p) \leftarrow \kw{dep}(q_p) + 1$\;
    }
    $\kw{TSI}(q_p) \leftarrow \kw{freq}(q_p) + \lambda \cdot \kw{dep}(q_p)$\;
  }
  \KwRet{$\kw{TSI}(q_t)$ and $\kw{TSI}(q_p)$}\;
}
\end{algorithm}

\etitle{Online $\kw{TSI}$ maintenance.}
Equipped with the dependency detector and cached parents, each access to $q_t$ triggers a constant-time update cascade:
(1) increment $\kw{freq}(q_t)$ and recompute $\kw{TSI}(q_t)$;
(2) if $\kw{par}(q_t)=\emptyset$, run \textsc{DetectParent} to assign a parent (otherwise reuse the cached parent);
(3) if $q_p\in\C$, update $\kw{dep}(q_p)$ and $\kw{TSI}(q_p)$ using Algorithm~\ref{alg:tsi-update-lite}.

\section{Evaluation}
\label{sec:evaluation}

\subsection{Evaluation Questions}
\label{sec:eval_questions}
The goal of our evaluation is twofold: to validate RAC in scenarios where standard policies fail, and to verify the effectiveness of our relation-aware signals and unified heuristic.
We design our experiments to answer the following research questions:

\textbf{Q1 Robustness under sparse recurrence. (~\ref{sec:main_results})}
On workloads with long reuse distances and sparse local recurrence, does \emph{RAC} maintain stable gains by leveraging relation-aware signals, whereas recency-/frequency-driven policies become unreliable?

\textbf{Q2 Effectiveness on real-world traces.(~\ref{sec:main_results})}
On timestamp-continuous \LLM\ dialogue traces from real datasets, how does \emph{RAC} compare with state-of-the-art eviction policies under identical semantic hit semantics?

\textbf{Q3 Ablation on components.(~\ref{sec:rq3_ablation})}
Under identical cache budgets and hit semantics, what is the marginal contribution of \emph{(A) Topical Prevalence (TP)} and \emph{(B) Topic Structural Importance (TSI)} to RAC's performance gains?

\textbf{Q4 Parameter impact and robustness.(~\ref{sec:rq4_params})}
How do key hyperparameters in \emph{RAC} affect cache hit rate and latency, and how stable is \emph{RAC} across a wide parameter range?

\begin{figure*}[t]
  \centering

  \includegraphics[width=0.52\textwidth]{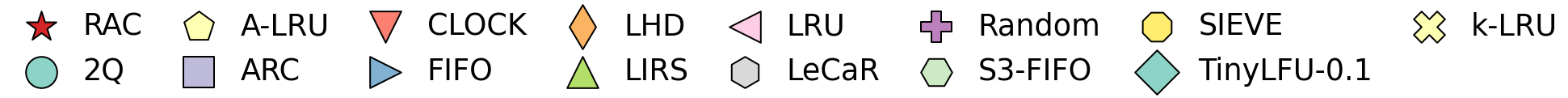}

  \vspace{0.3em}

  \begin{subfigure}[t]{0.49\textwidth}
    \centering
    \includegraphics[width=0.94\linewidth]{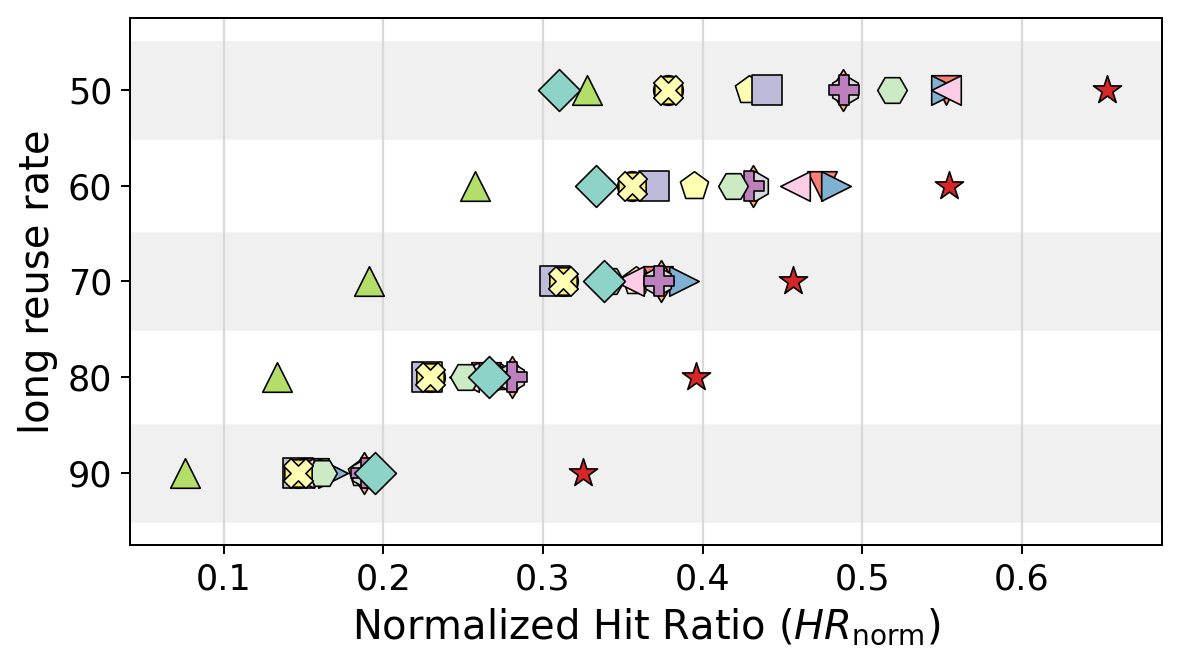}
    \caption{Simulated sequences varying long reuse-distance ratio}
    \label{fig:rq1_hr_two_axes_a}
  \end{subfigure}\hspace{0.6em}%
  \begin{subfigure}[t]{0.49\textwidth}
    \centering
    \includegraphics[width=0.94\linewidth]{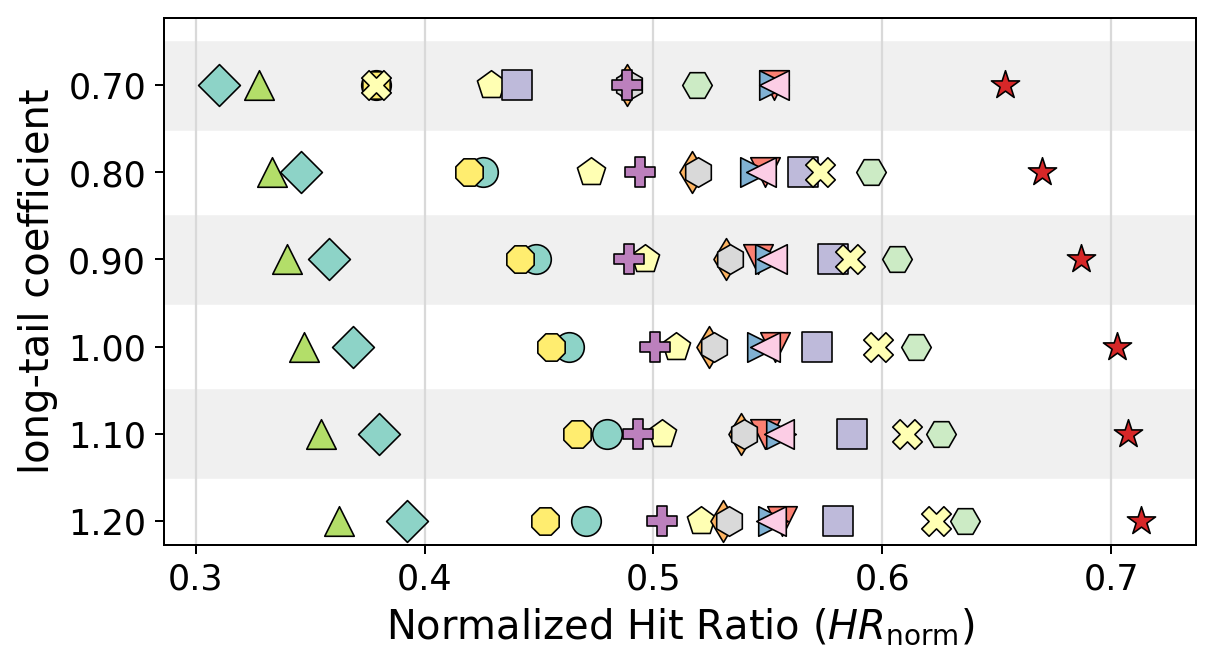}
    \caption{Simulated sequences varying long-tail coefficient}
    \label{fig:rq1_hr_two_axes_b}
  \end{subfigure}

  \vspace{0.2em}
  \caption{Hit ratio on simulated sequences under two stress axes: (a) varying long reuse-distance ratio; (b) varying long-tail coefficient.}
  \label{fig:rq1_hr_two_axes}
\end{figure*}

\begin{figure*}[t]
  \centering

  \begin{subfigure}[t]{0.06\textwidth}
    \centering
    \raisebox{0.18\height}{\includegraphics[width=0.50\linewidth]{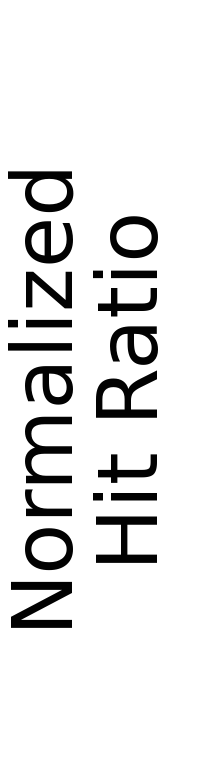}}
  \end{subfigure}\hspace{-0.15em}
  \begin{subfigure}[t]{0.305\textwidth}
    \centering
    \includegraphics[width=\linewidth]{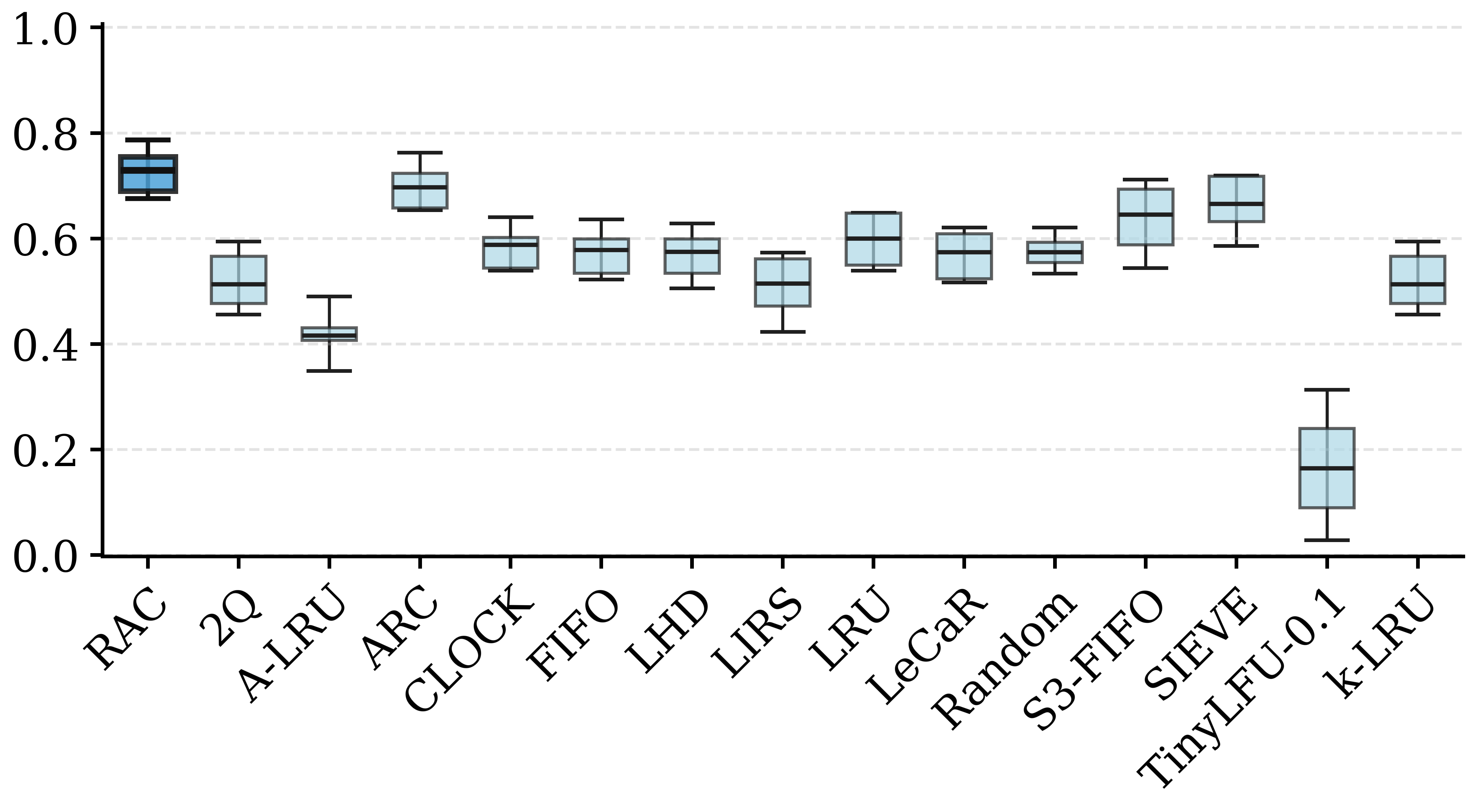}
    \caption{True trace at 2.5\% capacity}
    \label{fig:rq2_true_2p5}
  \end{subfigure}\hspace{0.25em}%
  \begin{subfigure}[t]{0.305\textwidth}
    \centering
    \includegraphics[width=\linewidth]{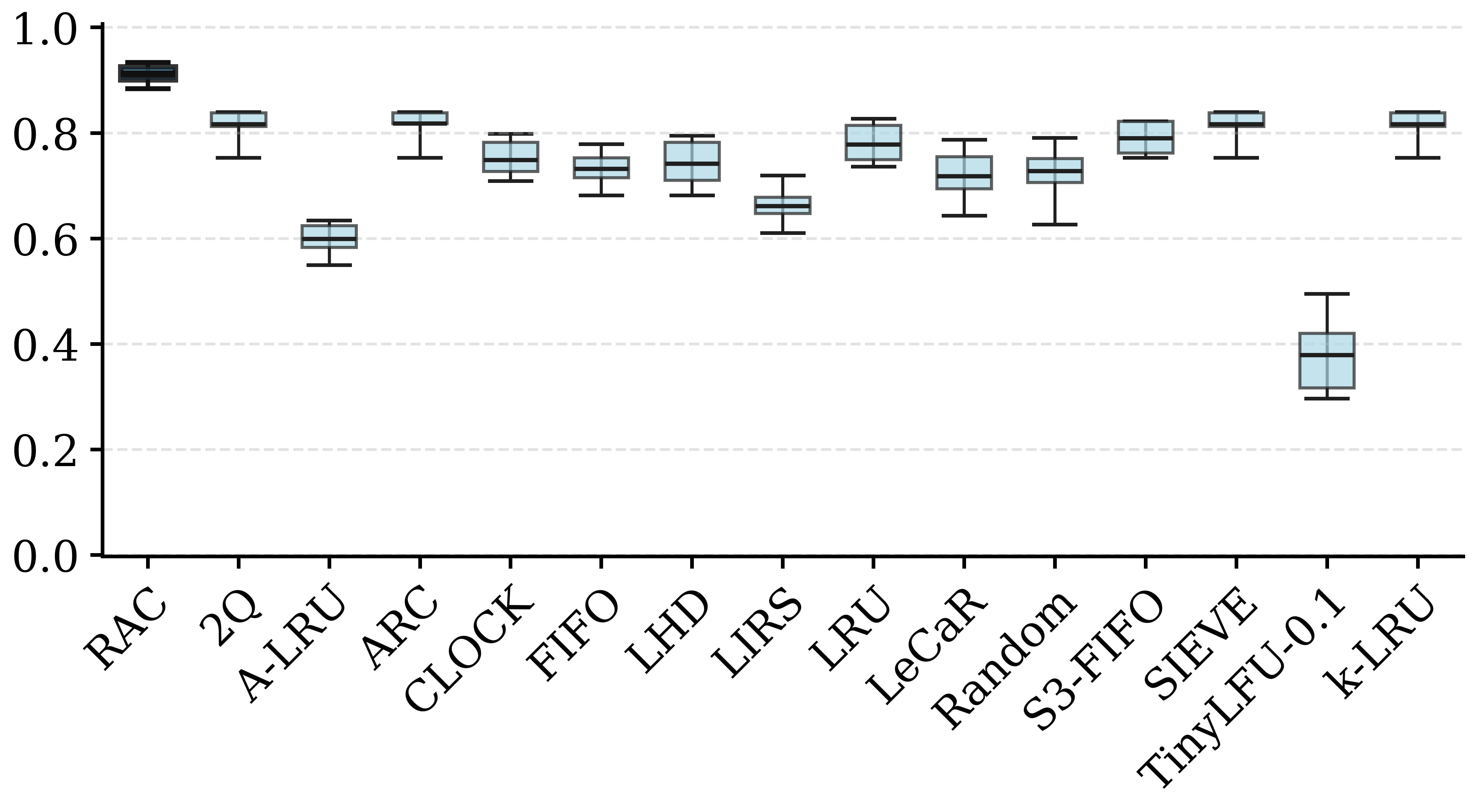}
    \caption{True trace at 10\% capacity}
    \label{fig:rq2_true_10}
  \end{subfigure}\hspace{0.25em}%
  \begin{subfigure}[t]{0.305\textwidth}
    \centering
    \includegraphics[width=\linewidth]{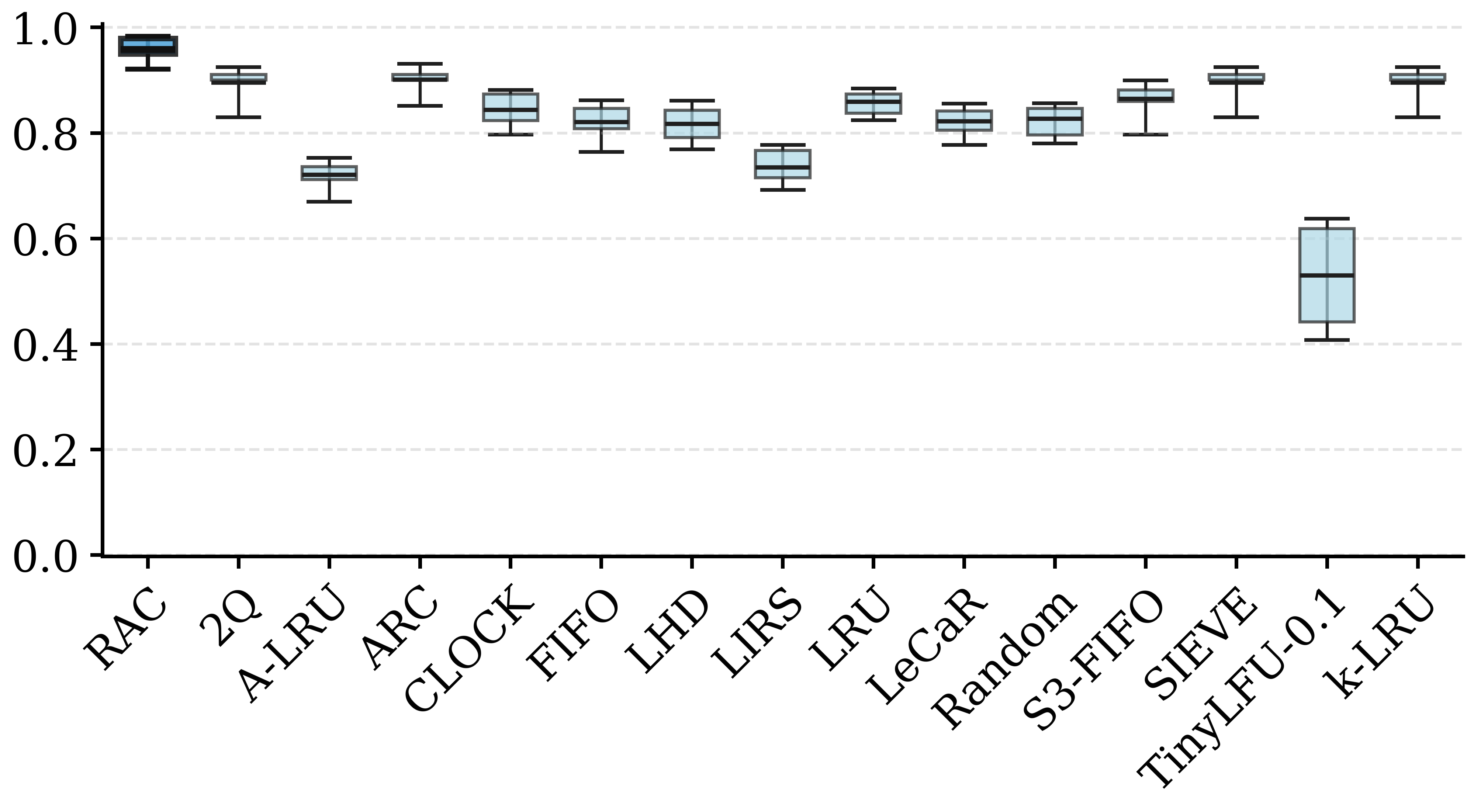}
    \caption{True trace at 20\% capacity}
    \label{fig:rq2_true_20}
  \end{subfigure}

  \vspace{-0.6em}
  \caption{Normalized hit ratio on timestamp-continuous OASST1 sub-traces under different cache capacities.}
  \label{fig:rq2_summary_realtraces_caps}
\end{figure*}

\subsection{Evaluation Setup}
\label{sec:eval_setup}

\stitle{Workloads.}
We utilize both real-world logs and synthetically constructed traces to evaluate performance across varying degrees of locality and contention.

\etitle{Real traces (OASST1).}
To answer \textbf{RQ2}, we extract traces from OASST1, a publicly available human-assistant dialogue corpus collected by the OpenAssistant project and released with conversation-thread structure and message-level metadata.
OASST1 provides chronological timestamps and conversation-thread structure, which enables constructing timestamp-continuous traces without splitting or interleaving threads.
We extract \textbf{10 non-overlapping, timestamp-continuous sub-traces}, each containing \textbf{10{,}000} dialogue requests.
All policies process the \emph{same} request sequence under identical semantic hit semantics for a fair comparison.

\etitle{Synthetic traces.}
To address \textbf{RQ1} by overcoming the limited structural diversity of real traces, we construct synthetic traces that, under tight cache budgets, induce workload regimes where recency and frequency become weak reuse predictors, and thus stress-test whether RAC sustains gains when short-window policies fail.
We adopt a topic-level semi-Markov generator: each trace concatenates variable-length topic episodes, where each episode is a complete multi-turn session that is never split or interleaved; hence topic switches occur only at session boundaries.

We sample $N=120$ topics and maintain a session pool, expanding each topic to $\sim$40 complete sessions (original + generated variants) using ChatGPT.
Within each topic, sessions exhibit context-ordered dependencies that can be abstracted as a DAG; we preserve this property in generated variants while enriching structural diversity by extending dependency branches with consistent continuations.

We fix cache capacity to $C{=}1000$ and trace length to $10{,}000$ requests.
We conduct controlled sweeps along two workload axes, generating \textbf{20 independent traces per setting}:
\begin{itemize}
  \item \emph{Reuse-distance axis.}
  We call a reuse \emph{long} if its reuse distance exceeds the cache capacity $C$.
  The \emph{long-reuse ratio} is the fraction of reuse events in a trace that are long under this definition.
  Fixing $\gamma{=}0.7$, we vary the long-reuse ratio from 50\% to 90\% (step 10\%) by repeating prior sessions and placing repeats at randomized positions.

  \item \emph{Long-tail skew axis.}
  We fix the long-reuse ratio at 50\% and vary the Zipf exponent $\gamma\in\{0.7,0.8,\ldots,1.2\}$, which controls how concentrated the topic popularity is (smaller $\gamma$ yields a flatter distribution, while larger $\gamma$ yields a heavier head and a longer tail).
  This range is consistent with empirical observations that query frequencies in real logs follow heavy-tailed Zipf/power-law-like distributions~\cite{petersen2016powerlaw,lillo2021estimating}.
\end{itemize}

\stitle{Metrics(Normalized Hit Ratio $HR_{\text{norm}}$).}
Let $HR_{\text{algo}}(C)$ be the hit ratio of an eviction policy under cache capacity $C$ on the given request sequence (with the same hit semantics).
Let $HR_{\text{full}}$ be the hit ratio of an infinite cache on the same sequence, i.e., the upper bound when no evictions occur.
We report
\[
HR_{\text{norm}}(C)\;=\;\frac{HR_{\text{algo}}(C)}{HR_{\text{full}}}.
\]
This normalization controls for dataset-dependent hit ceilings and highlights the effect of the eviction policy.

\stitle{Capacity Configuration.}
We report cache capacity as a fraction of the total unique request footprint in each trace (e.g., $C\in\{50,\ldots,400\}$ for OASST1), covering the cache-cliff to moderately provisioned regimes.
For \textbf{RQ1} (synthetic), we fix capacity at \textbf{10\%}; for \textbf{RQ2} (real), we evaluate \textbf{2.5\%}, \textbf{10\%}, and \textbf{20\%}; and for \textbf{RQ3} (ablation), we sweep \textbf{2.5\%--20\%} with a \textbf{2.5\%} step.

\stitle{Methods and baselines.}
We evaluate RAC against representative eviction policies under identical hit semantics and cache budgets.

\begin{itemize}
  \item \textbf{Our methods:} \textbf{RAC} (full; TP+TSI), \textbf{RAC w/o TP} (TSI only), and \textbf{RAC w/o TSI} (TP only). The ablations isolate each component's marginal contribution (\textbf{RQ3}).
  \item \textbf{External baselines:}
  \begin{itemize}
    \item \textbf{Classic heuristics:} \texttt{FIFO}, \texttt{LRU}, \texttt{CLOCK}, \texttt{TTL}. \texttt{LRU} is a common production default (including KV caches).
    \item \textbf{Frequency-/scan-resistant:} \texttt{TinyLFU}, \texttt{ARC}, \texttt{S3-FIFO}, \texttt{SIEVE}, \texttt{2Q}. These policies mitigate scans but do not model semantic relations.
    \item \textbf{Learning-based:} \texttt{LHD}, \texttt{LeCaR}, which adapt online using observed access outcomes.
  \end{itemize}
\end{itemize}

\stitle{Implementation Details.}
\stitle{Implementation Details.}
Unless otherwise specified, RAC uses Top-1 vector retrieval with a topic-routing (hit) threshold $\tau=0.85$, a conservative value calibrated to match ChatGPT-judged semantic equivalence.
We maintain relation evidence with an edge-pruning threshold $\tau_{\text{edge}}=0.6$ and decay factor $\lambda=1$.
For synthetic workloads, unless a stress axis is being swept, we fix the Zipf exponent to $\gamma=0.7$ and set the long-reuse fraction to 50\%, where an event is counted as \emph{long reuse} if its reuse distance exceeds the cache capacity $C$.
To answer \textbf{RQ4}, we further vary these hyperparameters and report their sensitivity in \S\ref{sec:rq4_params}.

\begin{figure}[t]
  \centering
  \begin{minipage}[t]{0.5\linewidth}
    \centering
    \includegraphics[width=\linewidth]{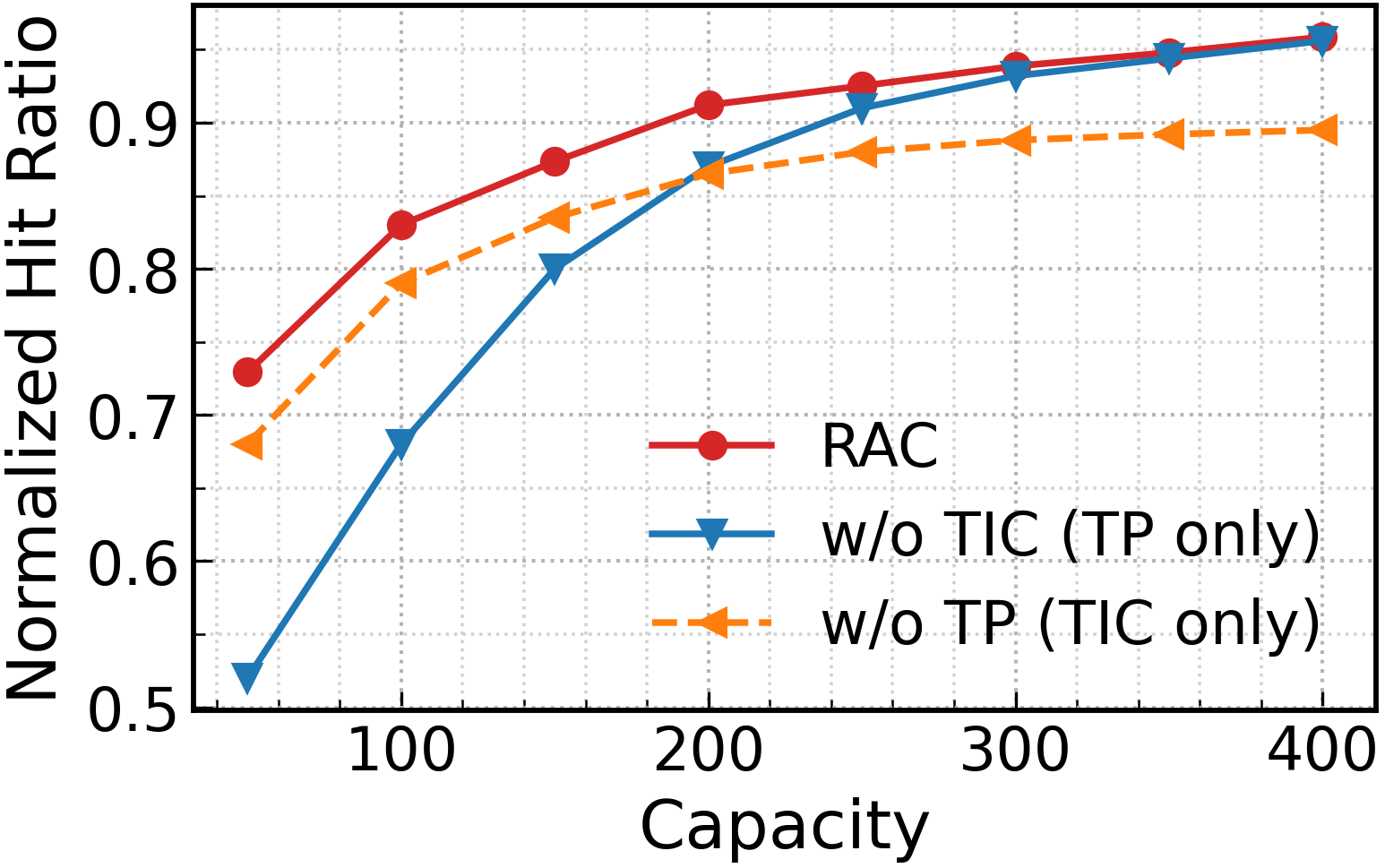}
    \caption*{\small (a) Performance vs. capacity}
  \end{minipage}\hspace{0em}%
  \begin{minipage}[t]{0.5\linewidth}
    \centering
    \includegraphics[width=\linewidth]{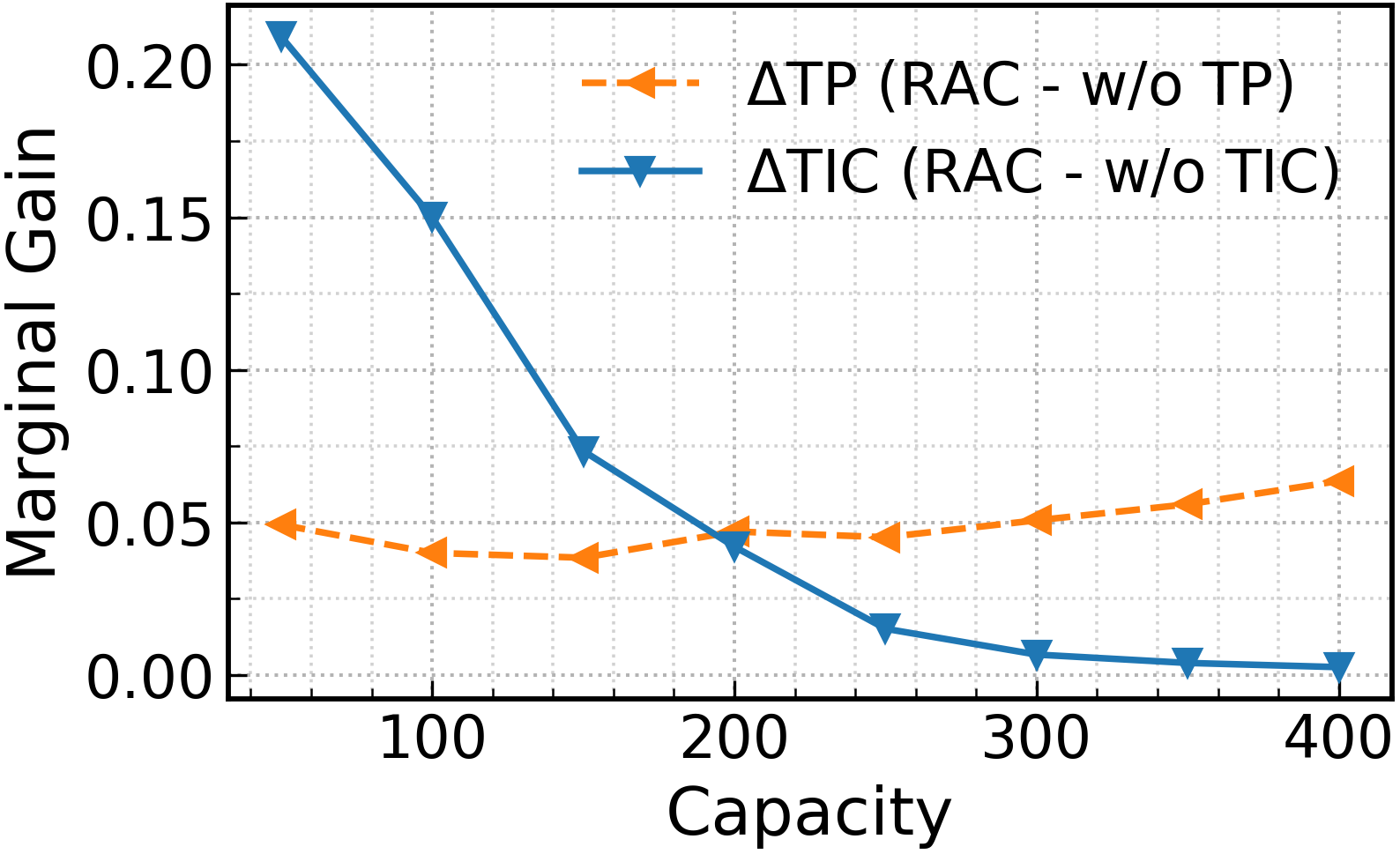}
    \caption*{\small (b) Marginal gains ($\Delta$TP, $\Delta$TIC)}
  \end{minipage}
  \caption{Ablation results under varying cache capacities.}
  \label{fig:rq3_ablation_two}
\end{figure}

\subsection{Main Results}
\label{sec:main_results}

We evaluate RAC under the unified simulator and the semantic hit semantics defined in \S\ref{sec:eval_setup}.
For \textbf{RQ1}, we conduct controlled stress tests on synthetic traces along the reuse-distance and popularity-skew dimensions (Figure~\ref{fig:rq1_hr_two_axes}).
For \textbf{RQ2}, we evaluate on timestamp-continuous OASST1 sub-traces under multiple cache capacities (Figure~\ref{fig:rq2_summary_realtraces_caps}).

\noindent\textbf{RQ1 (Robustness under sparse recurrence).}
Figure~\ref{fig:rq1_hr_two_axes} presents the hit ratios on simulated traces.
\emph{Impact of Reuse Distance (a):} RAC consistently ranks first, and its advantage widens as the workload becomes increasingly adversarial to short-window statistics.
As reuse shifts beyond the cache horizon, the recency and frequency signals relied upon by conventional policies weaken substantially.
Consequently, baselines deteriorate, whereas RAC degrades much more slowly, yielding a widening performance gap under stronger contention.
Quantitatively, compared with the strongest baseline, RAC achieves \emph{double-digit} relative gains under mild contention, growing to \emph{tens of percent} in the most challenging regimes.
Relative to the baseline average, these gains are even more pronounced.
These trends indicate that RAC remains robust precisely when sparse recurrence renders local hit statistics unreliable.

\emph{Impact of Popularity Skew (b):} While all methods generally improve as popularity concentrates on a smaller head set of topics, RAC maintains superiority across the entire tail-to-head spectrum.
RAC outperforms the strongest baseline by roughly \emph{15\%--20\%} consistently, and exceeds the baseline average by an even larger margin (typically \emph{tens of percent}).
This suggests that RAC not only benefits from frequency concentration but also extracts additional value within head topics by preserving structurally critical context anchors (TIC), thereby capturing semantic hits that queue- or sketch-based policies miss.

\noindent\textbf{RQ2 (Effectiveness on real-world traces).}
Figure~\ref{fig:rq2_summary_realtraces_caps} reports performance on timestamp-continuous OASST1 sub-traces, spanning capacities from the cache-cliff to moderately provisioned regimes.
RAC achieves the best overall effectiveness under identical hit semantics across all evaluated capacities.
Specifically, RAC improves over the strongest baseline by \emph{around 5\%--12\%}, and over the baseline average by a larger margin (often \emph{well above 10\%}, especially under tighter budgets).
These results confirm that RAC's relation-aware design generalizes beyond controlled simulations to real dialogue workloads, where irregular topic shifts and heavy-tailed reuse distances often invalidate short-window statistics.
By aggregating topic-level evidence (TP) and prioritizing structurally critical context anchors (TIC), RAC delivers stable improvements without altering the hit predicate.

\subsection{Ablation Study (RQ3)}
\label{sec:rq3_ablation}

To address \textbf{RQ3}, we isolate the contributions of RAC's two relation-aware signals by comparing the full model with two ablated variants:
(i) \emph{RAC w/o TP}, which removes topic-level evidence aggregation (topical prevalence);
and (ii) \emph{RAC w/o TIC}, which removes intra-topic dependency aggregation.
Figure~\ref{fig:rq3_ablation_two} reports normalized hit ratios and the corresponding marginal gains.

Figure~\ref{fig:rq3_ablation_two}(a) shows that full RAC consistently outperforms both ablations across all cache sizes, indicating that TP and TIC are complementary: removing either component yields substantial degradation.
Figure~\ref{fig:rq3_ablation_two}(b) further reveals their distinct roles across the capacity spectrum.
In the \emph{cache-cliff regime} (tight budgets), removing TIC causes the sharpest drop, highlighting the importance of preserving within-topic context anchors when capacity is scarce.
As capacity increases, the marginal benefit of TIC diminishes naturally because larger caches retain such anchors by default.

In contrast, TP contributes persistently across all capacities.
Removing TP induces steady degradation even at larger budgets, suggesting that topic-level aggregation remains crucial for long-horizon revisits and irregular topic transitions where local statistics remain sparse.
Overall, RAC's gains arise from the synergy between TP (long-horizon topical awareness) and TIC (intra-topic discrimination), ensuring robust effectiveness across diverse budgets.

\begin{figure}[t]
  \centering
  \begin{subfigure}[t]{0.06\columnwidth}
    \centering
    \includegraphics[width=\linewidth]{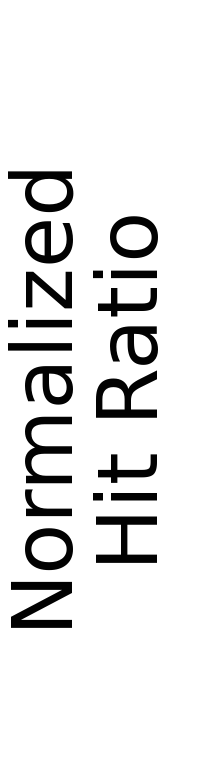}
    \caption*{}
  \end{subfigure}\hfill
  \begin{subfigure}[t]{0.30\columnwidth}
    \centering
    \captionsetup{width=1.2\linewidth}
    \includegraphics[width=\linewidth]{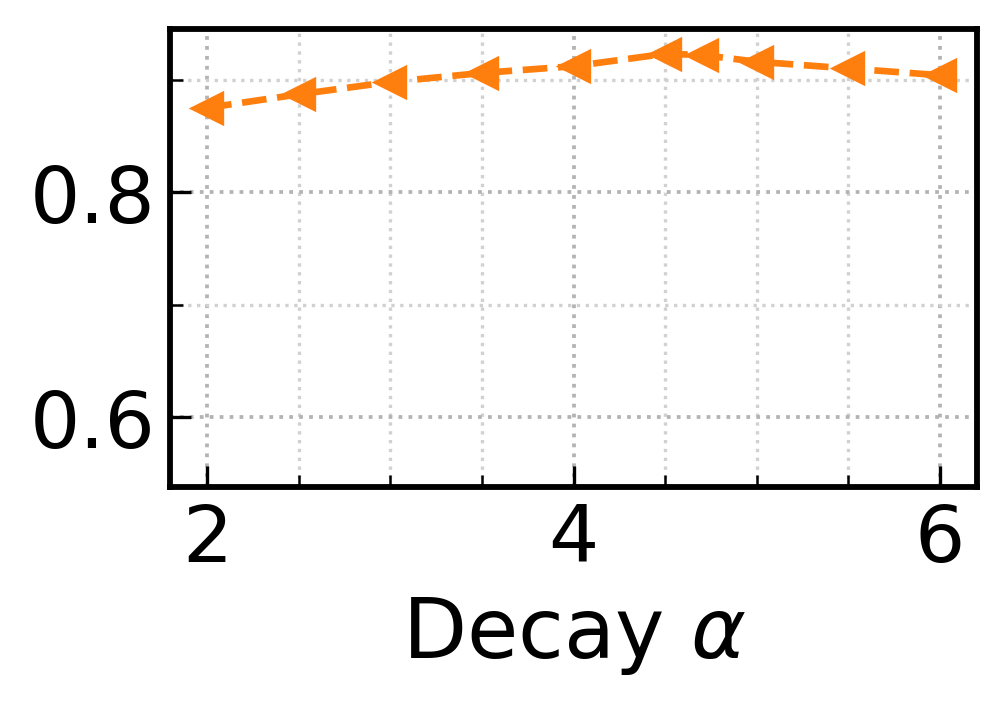}
    \caption{Decay coefficient~$\alpha$}
    \label{fig:rq4-alpha}
  \end{subfigure}\hfill
  \begin{subfigure}[t]{0.30\columnwidth}
    \centering
    \captionsetup{width=1.2\linewidth}
    \includegraphics[width=\linewidth]{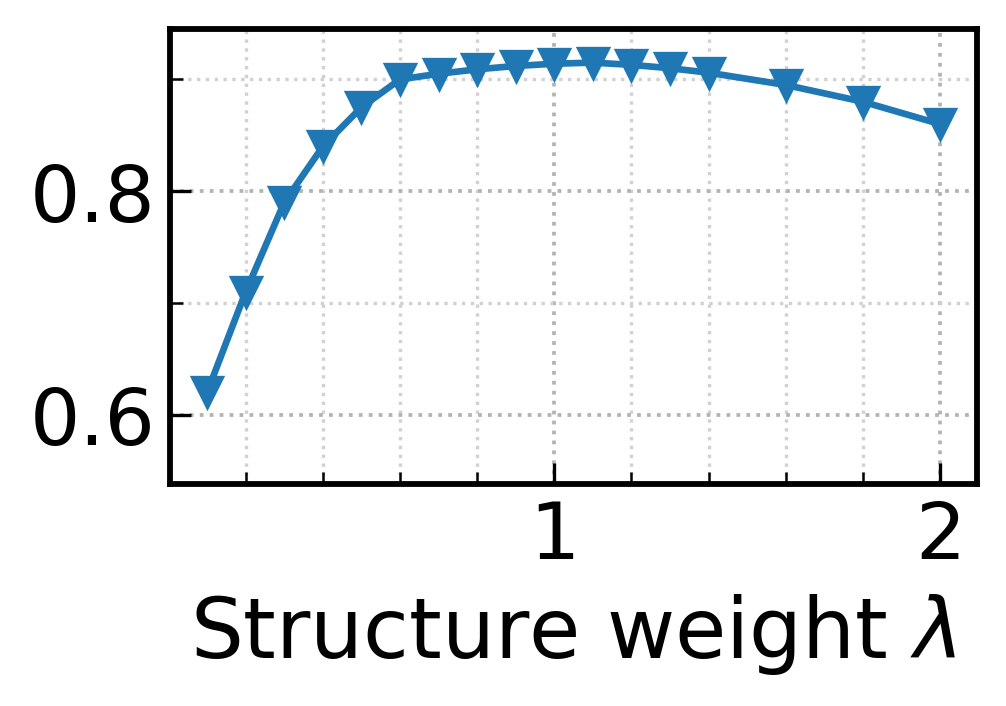}
    \caption{Weight~$\lambda$}
    \label{fig:rq4-lambda}
  \end{subfigure}\hfill
  \begin{subfigure}[t]{0.30\columnwidth}
    \centering
    \captionsetup{width=1.2\linewidth}
    \includegraphics[width=\linewidth]{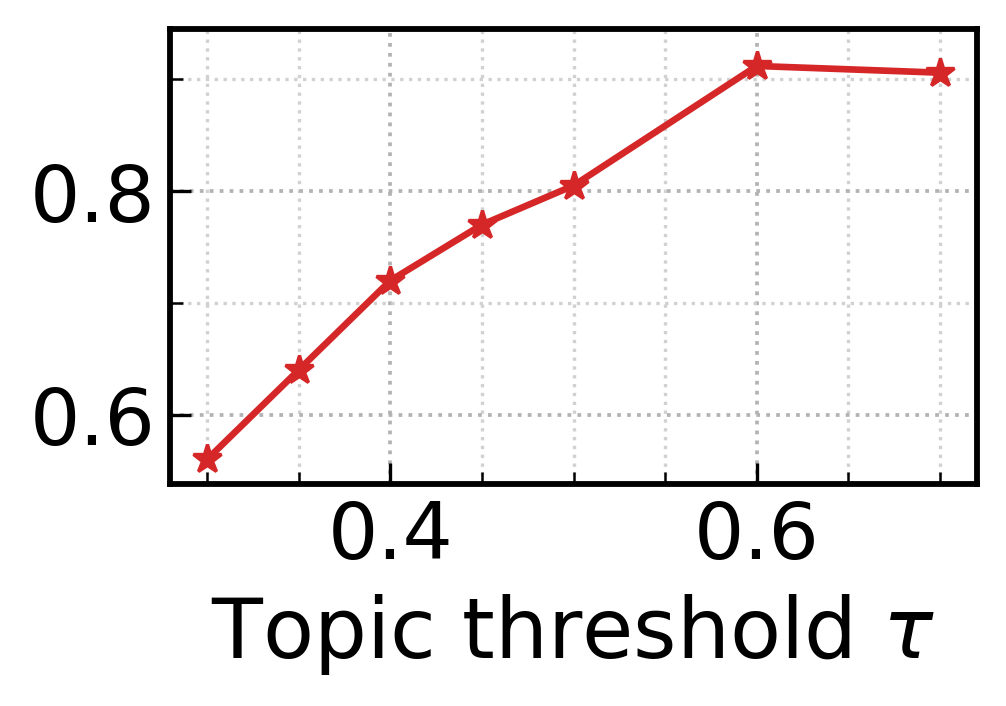}
    \caption{Threshold~$\tau$}
    \label{fig:rq4-tau}
  \end{subfigure}
  \caption{RQ4: Parameter sensitivity at 10\% cache capacity.}
  \label{fig:rq4-params}
\end{figure}

\subsection{Parameter Sensitivity (RQ4)}
\label{sec:rq4_params}

To answer \textbf{RQ4}, Figure~\ref{fig:rq4-params} reports RAC's sensitivity to three key hyperparameters at 10\% cache capacity: routing threshold $\tau$, decay factor $\alpha$, and structural weight $\lambda$.
Overall, RAC exhibits a broad region of stable performance: once a parameter enters a reasonable operating range, performance becomes relatively insensitive to moderate changes.
This indicates that RAC does not rely on delicate tuning to achieve strong effectiveness.

\noindent\textbf{Routing threshold $\tau$.}
As shown in Figure~\ref{fig:rq4-params} (right), an overly small $\tau$ makes routing permissive, admitting noisy matches that dilute effective reuse.
Increasing $\tau$ into a reasonable semantic-matching range improves effectiveness substantially before saturating; further increasing $\tau$ yields limited gains and may slightly reduce reuse by becoming overly strict.
This supports the use of a strict semantic gate while demonstrating stability within a sensible band.

\noindent\textbf{Decay factor $\alpha$.}
Regarding temporal aggregation (Figure~\ref{fig:rq4-params}, left), when $\alpha$ is too small, topic evidence decays slowly and can overweight stale history; increasing $\alpha$ improves effectiveness by discounting older evidence appropriately.
However, when $\alpha$ becomes overly large, evidence is discounted too aggressively, making decisions more reactive to short-term fluctuations and slightly degrading performance.
The curve is smooth with a wide near-optimal region, indicating robustness to $\alpha$.

\noindent\textbf{Structural weight $\lambda$.}
Regarding the TP--TIC trade-off (Figure~\ref{fig:rq4-params}, middle), small $\lambda$ underutilizes dependency signals, whereas excessively large $\lambda$ overemphasizes structure at the expense of topical prevalence.
The best performance occurs in a moderate-to-high band and remains close to optimal over a broad interval, consistent with our default setting.
Taken together, these results show that each parameter has an interpretable effect and that RAC maintains stable performance over wide ranges.

\subsection{Evaluation Summary}
\label{sec:eval_summary}
\begin{itemize}
  \item \textbf{For Q1 (Robustness under sparse recurrence):}
  RAC is robust when local hit statistics become uninformative under sparse recurrence, sustaining stable advantages by relying on relation-aware evidence rather than short-window recency/frequency signals.

  \item \textbf{For Q2 (Effectiveness on real-world traces):}
  RAC generalizes to timestamp-continuous real dialogue traces and remains effective under identical hit semantics, indicating that its gains are not an artifact of synthetic construction.

  \item \textbf{For Q3 (Ablation on components):}
  TP and TSI play complementary roles: TP captures reuse regularities via topic-level aggregation (beyond independent items), while TSI enables efficient within-topic discrimination of reuse value, which stabilizes performance under heavy-tailed popularity.

  \item \textbf{For Q4 (Parameter impact and robustness):}
  RAC is insensitive to moderate hyperparameter perturbations within a wide operating region, suggesting that its effectiveness does not depend on delicate tuning.
\end{itemize}

\section{Related Work}
\label{sec:related-work}

We review prior work from two perspectives: reusing computed content in \LLM serving, and cache eviction under limited capacity.

\smallskip
\stitle{Reuse of computed content in \LLM serving.}
Focusing on caching inference artifacts to skip recomputation, our method supports both \emph{semantic matching} and \emph{compositional chunking}.

\etitle{Semantic caches.}
Focusing on reusing \emph{final generations} via similarity, GPTCache employs embedding lookup \cite{bang2023gptcache}, MeanCache leverages cluster representatives \cite{gillmeancache}, and ScaLM optimizes for scalable serving constraints \cite{li2024scalm}.

\etitle{Compositional \kw{Chunk KV} reuse.} 
Targeting \emph{intermediate \kw{Chunk KV}} reuse, PromptCache and CacheBlend enable modular reuse via structural compatibility \cite{gim2024promptcache} and component blending \cite{yao2025cacheblend}. RAGCache and Cache-Craft optimize management via hierarchical policies \cite{Jin2024RAGCache} and composability-based retention \cite{Agarwal2025CacheCraft}, respectively, while KVCache in the Wild characterizes production-level memory pressure \cite{wang2025kvcachewild}.

\smallskip
\stitle{Eviction algorithms under limited capacity.}
Eviction policies use lightweight metadata to prioritize cached items and select victims under capacity pressure.

\etitle{Simple heuristics.} These policies use deterministic metrics for ranking. Belady's MIN establishes the offline optimum \cite{Belady1966Optimal,Mattson1970Stack}, while LRU/FIFO implement constant-overhead tail eviction \cite{Corbato1969Multics,Sleator1985Competitive}. Similarity caching groups items via modified hit semantics \cite{Neglia2022SimilarityCaching, Sabnis2021GRADES}. GDS/GDSF and LRFU unify size, cost, and frequency into tunable priority scores \cite{Cao1997GDS,Cherkasova1998GDSF,lee2001lrfu}.

\etitle{Composite-structure heuristics.} These designs employ multi-level queues or auxiliary structures to refine value estimation. 2Q, ARC/CAR, and TinyLFU filter one-timers and churn via probationary queues or sketches \cite{Johnson1994TwoQ,megiddo2004arc,Einziger2017TinyLFU}. S3-FIFO and Sieve offer high-performance demotion rules with minimal metadata \cite{liu2023s3fifo,Zhang2023Sieve}. LRU-$k$ and LIRS/LIRS2 utilize reuse distance to separate persistent items from transients \cite{ONeil1993LRUk,Jiang2002LIRS,Zhong2021LIRS2}. In \LLM runtimes, SGLang manages \kw{KV} states via radix trees; however, its eviction remains LRU-like and is primarily constrained by parent--child dependencies to preserve structural validity \cite{zheng2024sglang}.

\etitle{Learning-based eviction.} Learning-guided policies approximate utility via data-driven signals. Hawkeye and Glider learn cache-friendliness from OPT-derived supervision \cite{jain2016hawkeye,shi2019glider}. LRB/HALP and ILCache map features to priorities approximating reuse distance \cite{song2020lrbelady,song2023halp,liu2020ilcache}, while LeCaR and CACHEUS adaptively weight experts via online feedback \cite{Hashemi2018LeCaR,Rodriguez2021CACHEUS}. DRLCache and its variants further model replacement as a sequential process via reinforcement learning \cite{zhong2017drlcache,zhou2022endtoend}.

This work differs from previous approaches in three key aspects:
(1) unlike \textbf{simple heuristics}, we estimate value from semantic associations and structural dependencies rather than isolated hit statistics;
(2) unlike \textbf{composite heuristics}, we use query relations to identify and preserve critical context anchors instead of auxiliary admission/filtering structures;
and (3) unlike prior \textbf{learning-based policies}, we adopt a lightweight, interpretable online rule that adapts to structural shifts without training or inference overhead.

\section{Conclusion}
\label{sec:conclusion}

We present Relation-Aware Cache (\RAC), an online cache replacement strategy tailored for dynamic \LLM workloads. Based on the insight that reuse is governed by discourse structure, \RAC substitutes isolated item signals with more stable topic-level frequency and recency, and utilizes Structural Importance to identify critical dependency anchors, thereby minimizing miss rates caused by sparse recurrence. We evaluate \RAC on both real-world dialogue traces and synthetic benchmarks. Experimental results demonstrate that \RAC outperforms state-of-the-art baselines by 20\% in hit ratio, validating that exploiting relation-aware signals is key to efficient memory management in long-context serving.

\clearpage
\balance
\bibliographystyle{ACM-Reference-Format}
\bibliography{paper}

\clearpage
\section{Complete Proof for Theorem~\ref{thm:dep_miss} and Extensions}
\label{app:dep_proof}

This appendix provides the complete proof for Theorem~\ref{thm:dep_miss} in the main text, which formalizes the monotone relation between the eviction-induced miss increase and the downstream dependency mass $\kw{dep}(\cdot)$.
We then outline an extension that connects dependency-DAG structure to a principled notion of structural-importance ranking.

\subsection{Complete Proof of Theorem~\ref{thm:dep_miss}}
\label{app:dep_proof:full}

\etitle{Setup and prerequisite semantics.}
Fix a topic $s$ and its dependency links $\mathcal{E}_s$.
Consider the in-topic request sequence $\{Q_t\}_{t=1}^{T}$, where $Q_t$ denotes the realized query at time $t$.
We interpret each edge $(q_k,q_j)\in\mathcal{E}_s$ as a prerequisite relation: $q_k$ is a context anchor required by $q_j$ within the topic.
Accordingly, if $q_k$ is absent from the cache when serving a request to such a dependent query $q_j$, then the request to $q_j$ incurs \emph{at least one additional miss} that is attributable to the absence of $q_k$ (independent of how other entries are arranged).
This defines a conservative, \emph{unavoidable} miss semantics: we only charge misses that must occur due to the missing prerequisite anchor.

\etitle{Unavoidable miss lower bound induced by evicting $q_k$.}
Let $\mathcal{N}(q_k)\triangleq\{q_j:(q_k,q_j)\in\mathcal{E}_s\}$ denote the set of one-hop dependents of $q_k$.
We define, over a horizon $T$, the number of \emph{unavoidable additional misses} induced by the absence of $q_k$ as
\begin{equation}
\label{eq:deltaT_def}
\Delta_T(q_k)
\;\triangleq\;
\sum_{t=1}^{T}\mathbbm{1}\{Q_t\in \mathcal{N}(q_k)\}.
\end{equation}
That is, $\Delta_T(q_k)$ counts how many requests in the horizon target dependent queries of $q_k$.
Under the prerequisite semantics, if $q_k$ is evicted (and thus missing whenever these dependent requests arrive), then each of these requests must incur at least one extra miss attributable to the missing anchor.

\begin{proof}[Proof of Theorem~\ref{thm:dep_miss}]
Fix topic $s$ and an anchor query $q_k$.
Consider any time step $t\in\{1,\ldots,T\}$.

If $Q_t\notin\mathcal{N}(q_k)$, then by definition there is no edge $(q_k,Q_t)\in\mathcal{E}_s$, and the prerequisite semantics imposes no additional miss that is attributable to the absence of $q_k$ for serving $Q_t$.
Thus, the absence of $q_k$ does not \emph{force} an extra miss at time $t$.

If $Q_t\in\mathcal{N}(q_k)$, then there exists a dependent query $q_j$ such that $Q_t=q_j$ and $(q_k,q_j)\in\mathcal{E}_s$.
By the prerequisite semantics, whenever $q_k$ is absent, serving this request to $q_j$ must incur \emph{at least one} additional miss attributable to the missing anchor $q_k$.
Therefore, for each such time step $t$, the eviction of $q_k$ contributes at least one unit to the eviction-induced miss increase.

Summing these unavoidable contributions over $t=1$ to $T$, we conclude that over any horizon $T$, the cumulative miss increase attributable to evicting $q_k$ is lower bounded by the number of dependent-query requests within the horizon, namely $\Delta_T(q_k)$ in~\eqref{eq:deltaT_def}.
Equivalently, the eviction-induced miss increase is monotonically increasing in $\Delta_T(q_k)$.

We now connect this lower bound to the online statistic $\kw{dep}(q_k)$.
Recall the definition
\[
\kw{dep}(q_k)\;\triangleq\;\sum_{(q_k,q_j)\in\mathcal{E}_s}\kw{freq}(q_j),
\]
where $\kw{freq}(q_j)$ is the observed hit/request mass of $q_j$ within topic $s$ over the same measurement horizon.
Since $\kw{dep}(q_k)$ aggregates the downstream mass over exactly the dependent set $\mathcal{N}(q_k)$, it is a direct empirical proxy for how frequently requests fall into $\mathcal{N}(q_k)$, i.e., for $\Delta_T(q_k)$ up to a common scaling determined by the horizon and counting convention.
Hence, the eviction-induced miss increase (via its unavoidable lower bound) is monotonically increasing in $\kw{dep}(q_k)$.
\end{proof}

\subsection{Extension: Structural-Importance Ranking on a Dependency DAG}
\label{app:dep_proof:dag}

Theorem~\ref{thm:dep_miss} motivates a first-order proxy that aggregates \emph{direct} dependents.
Within a topic, prerequisite relations form a directed acyclic graph (DAG), and the eviction externality of an anchor can be shaped by the \emph{global} dependency structure rather than only its one-hop neighborhood.
To obtain a lightweight structural ordering beyond one-hop counts, we adopt a PageRank/TextRank-style graph-based ranking, which assigns vertex importance via the stationary distribution of a random walk on a directed graph~\cite{page1999pagerank,mihalcea2004textrank,erkan2004lexrank}.

\etitle{Reverse-edge construction.}
Fix a topic and let $G=(V,E)$ denote its prerequisite DAG, where each node $v\in V$ is a query (cache entry) in the topic, and each directed edge $(u\!\to\! v)\in E$ indicates that $u$ is a prerequisite context anchor for $v$.
To rank anchors, we propagate importance from dependents back to prerequisites.
Concretely, we run the ranking walk on the reversed edges: for each prerequisite link $(u\!\to\! v)\in E$, we use a reversed link $(v\!\to\! u)$ in the walk.

\etitle{Random walk with uniform restart.}
Let $\mathrm{Out}(v)\triangleq\{u\in V:\ (v\!\to\! u)\in E\}$ denote the outgoing neighbors of $v$ in the reversed graph.
We define a random walk over $V$ with damping parameter $\beta\in(0,1)$:
with probability $\beta$, the walk follows a reversed edge chosen uniformly from $\mathrm{Out}(v)$; with probability $1-\beta$, it restarts to a uniformly random node in $V$.
For dangling nodes ($|\mathrm{Out}(v)|=0$), we fall back to a uniform jump.
Formally, the transition probability is
\[
P(u\mid v)\;\triangleq\;
\begin{cases}
\frac{1}{|\mathrm{Out}(v)|}, & u\in \mathrm{Out}(v)\ \text{and}\ |\mathrm{Out}(v)|>0,\\[4pt]
\frac{1}{|V|}, & |\mathrm{Out}(v)|=0,\\[4pt]
0, & \text{otherwise}.
\end{cases}
\]
We define the structural-importance score $r:V\to\mathbb{R}_{\ge 0}$ as the stationary distribution of this Markov chain, i.e., the unique solution to
\begin{equation}
\label{eq:rw_restart_uniform}
r(u)\;=\;(1-\beta)\cdot\frac{1}{|V|}
\;+\;\beta\sum_{v:\ (v\to u)\in E} P(u\mid v)\,r(v).
\end{equation}

\begin{proposition}[Existence, uniqueness, and computability]
\label{prop:exist_unique_uniform}
For any $\beta\in(0,1)$, the uniform-restart walk is irreducible and aperiodic.
Hence the stationary distribution in~\eqref{eq:rw_restart_uniform} exists and is unique.
Moreover, it can be computed by power iteration: starting from any distribution $r^{(0)}$ over $V$, repeatedly applying the update in~\eqref{eq:rw_restart_uniform} yields a sequence $\{r^{(t)}\}$ that converges to $r$.
\end{proposition}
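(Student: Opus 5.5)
I would prove the proposition by writing the walk as a single stochastic matrix and applying the standard Perron--Frobenius and contraction arguments for PageRank. Let $n=|V|$ and define the full transition kernel
\[
M(u\mid v)\;\triangleq\;\beta\,P(u\mid v)+(1-\beta)\frac{1}{n},
\]
which is the walk described in the text: with probability $\beta$ it follows $P(\cdot\mid v)$, and with probability $1-\beta$ it restarts uniformly. If $v$ is dangling, $P(\cdot\mid v)$ is already uniform, so $M(\cdot\mid v)$ is uniform as well.

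The first step is to check that each column $M(\cdot\mid v)$ is a probability distribution. For a non-dangling $v$, $P(\cdot\mid v)$ places mass $1/|\mathrm{Out}(v)|$ on each node of $\mathrm{Out}(v)$. For a dangling $v$, it places mass $1/n$ on every node. In both cases $\sum_u P(u\mid v)=1$, hence $\sum_u M(u\mid v)=1$.

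Next I would note that every entry satisfies
\[
M(u\mid v)\ \ge\ \frac{1-\beta}{n}\ >\ 0,
\]
since $\beta<1$. A positive matrix gives a chain that is irreducible (every state reaches every other in one step) and aperiodic (every self-loop has positive probability). On a finite state space, existence and uniqueness of the stationary distribution $r$ with $r=Mr$ then follow from Perron--Frobenius, or equivalently from the ergodic theorem for finite chains.

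The step I expect to be the main obstacle is reconciling $r=Mr$ with the displayed equation~\eqref{eq:rw_restart_uniform}. Expanding $r=Mr$ and using $\sum_v r(v)=1$ gives
\[
r(u)=(1-\beta)\frac{1}{n}+\beta\sum_v P(u\mid v)\,r(v).
\]
In this sum, the terms with $v\to u$ come from edges, and the terms with $v$ dangling contribute $r(v)/n$. The displayed equation indexes its sum by $(v\to u)\in E$, which drops the dangling contributions. I would therefore state explicitly that the sum in~\eqref{eq:rw_restart_uniform} is read as $\sum_v P(u\mid v)\,r(v)$, following the convention set by the definition of $P$. Under this reading the fixed points of~\eqref{eq:rw_restart_uniform} among distributions coincide with the stationary distributions of $M$.

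There is a second subtlety. Outside the simplex, the update as written is affine rather than linear, because the restart term does not scale with $\sum_v r(v)$. So I would restrict the iteration to probability vectors, which the update preserves because $M$ is column-stochastic.

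For computability, I would show that the update $F(x)=Mx$ is a contraction in $\ell_1$ on the set $\{x:\sum_u x(u)=0\}$. For two distributions $x$ and $y$, the restart parts cancel, leaving
\[
F(x)-F(y)=\beta P(x-y).
\]
Column-stochasticity of $P$ gives $\|P z\|_1\le\|z\|_1$, so
\[
\|F(x)-F(y)\|_1\le \beta\,\|x-y\|_1 .
\]
By Banach's fixed-point theorem on the closed simplex, for any starting distribution $r^{(0)}$ we get
\[
\|r^{(t)}-r\|_1\le 2\beta^t .
\]
This yields convergence of power iteration at geometric rate $\beta$, and it gives uniqueness a second time independently of the first argument.
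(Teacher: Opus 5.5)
Your proof is correct. The paper gives no separate proof. Its only justification is the one inside the statement: the walk is irreducible and aperiodic, hence has a unique stationary distribution, and convergence of power iteration is left implicit, presumably via the convergence theorem for finite ergodic chains.

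Your existence/uniqueness step is that same argument, made concrete by the entrywise bound $M(u\mid v)\ge(1-\beta)/|V|$. Your convergence step takes a genuinely different route. The ergodic theorem by itself gives convergence but no explicit rate. You instead show the update is a $\beta$-contraction in $\ell_1$, which gives the graph-independent bound $\|r^{(t)}-r\|_1\le 2\beta^t$ and reproves uniqueness without Perron--Frobenius.

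You also caught a real inconsistency that the paper glosses over. As displayed, \eqref{eq:rw_restart_uniform} sums only over $(v\to u)\in E$, so it drops the dangling columns. Every finite DAG has a sink, so the reversed walk always has at least one dangling node. The literal fixed point therefore has total mass strictly below $1$ and is not a distribution. Your reading $\sum_v P(u\mid v)\,r(v)$ is what makes the proposition true as stated.

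Two small additions:
\begin{itemize}
\item Restricting to the simplex is unnecessary. $F(x)-F(y)=\beta P(x-y)$ holds for all $x,y\in\mathbb{R}^{|V|}$, and any fixed point of the affine update automatically sums to $1$.
\item The dangling fallback and the restart use the same uniform vector. So the literal (substochastic) iteration is still a $\beta$-contraction, and its fixed point is a positive multiple of $r$. The ranking the paper actually uses is therefore unaffected.
\end{itemize}
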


\etitle{Centrality induced by dependency structure.}
The score $r(u)$ is the long-run visit probability of the random-surfer walk on the reversed dependency graph~\cite{page1999pagerank}.
As in TextRank/LexRank, a node becomes important if it is pointed to by other important nodes, yielding an eigenvector-centrality-like notion of salience that reflects the \emph{global} link structure~\cite{mihalcea2004textrank,erkan2004lexrank}.
In our setting, this means a prerequisite anchor receives higher importance when it is a structurally central dependency target of many downstream queries, especially when those downstream queries themselves occupy central positions in the workflow.

\etitle{A path-sum view (global dependency influence).}
Equation~\eqref{eq:rw_restart_uniform} admits a standard expansion that makes the propagation effect explicit.
Let $u$ denote the uniform distribution over $V$, i.e., $u(v)=1/|V|$.
In vector form, \eqref{eq:rw_restart_uniform} can be written as
\[
r \;=\; (1-\beta)\,u \;+\; \beta\,P^\top r,
\]
which yields
\begin{equation}
\label{eq:path_sum_uniform}
r \;=\; (1-\beta)\sum_{\ell=0}^{\infty}\beta^\ell (P^\top)^\ell u.
\end{equation}
This expansion shows that $r$ reflects the \emph{global} dependency structure: importance is repeatedly propagated along reversed prerequisite links with geometric attenuation.
As a result, anchors that are structurally central in the dependency workflow obtain higher scores.

\etitle{Why a structural term in $\kw{TSI}$ is sensible under sparse recurrence.}
The stationary score $r(\cdot)$ induces a lightweight structural-centrality ordering from the prerequisite DAG, reflecting eviction externalities that are not captured by entry-local recency/frequency.
This view is \emph{consistent with} our one-hop proxy $\kw{dep}(\cdot)$ in the main text:
$\kw{dep}(q)$ conservatively aggregates the direct dependent-side request mass attributable to an anchor, and thus provides a reasonable first-order structural signal in $\kw{TSI}(\cdot)$.
When finer structural discrimination is desired, the propagation-based score $r(\cdot)$ can be used as an \emph{optional refinement} to account for indirect support paths mediated by the dependency structure, while remaining efficiently computable by power iteration.

\begin{algorithm}[t]
\caption{Topic Retrieval via Representative Index (ANN Shortlist + Gated Routing)}
\label{alg:topic_retrieval}
\DontPrintSemicolon
\KwIn{incoming request $q_t$; threshold $\tau$; shortlist size $K$}
\KwOut{routed topic id $Z_t$ (or $\emptyset$ if none passes)}
\KwData{representatives $\{r(s)\}$ for active topics and a topic index supporting \textsf{IndexQuery}$(\phi(q_t),K)$}

$\mathcal{S}_t \leftarrow \textsf{IndexQuery}(\phi(q_t), K)$\; \tcp*[r]{retrieve up to $K$ nearest representatives}
$Z_t \leftarrow \emptyset$\;
\ForEach{$s\in \mathcal{S}_t$}{
  \If{$\mathrm{sim}(\phi(q_t), r(s))\ge \tau$}{
    \If{$Z_t=\emptyset$ \textbf{or} $\mathrm{sim}(\phi(q_t), r(s))>\mathrm{sim}(\phi(q_t), r(Z_t))$}{
      $Z_t \leftarrow s$\;
    }
  }
}
\Return{$Z_t$}\;
\end{algorithm}

\section{Representative Selection and Maintenance for Topic Routing}
\label{app:topic_mgmt}

\stitle{Topic abstraction and maintained state.}
We organize the cache into \emph{topics}, where each active topic $s$ groups cached queries routed to the same topical cluster.
Online, each topic maintains a compact state:
(i) a \emph{member set} $\mathcal{M}(s)$ of resident queries currently assigned to $s$;
(ii) a \emph{representative embedding} $r(s)$ serving as the routing key; and
(iii) per-query signals required by replacement (e.g., recency and $\mathrm{TSI}(\cdot)$ in Section~\ref{sec:component-B}).
To enable sub-linear routing across topics, we additionally maintain a topic-level vector index over $\{r(s)\}$ that supports top-$K$ nearest-neighbor queries and per-topic representative updates.
Topic creation/deletion is handled by the main workflow; here we focus on retrieval and on maintaining the above state consistently under online cache updates.

\begin{algorithm}
\caption{Representative and Index Maintenance (TSI-max Anchor with Lazy Refresh)}
\label{alg:topic_index_maint}
\DontPrintSemicolon
\KwIn{topic id $s$; event \textsf{Insert}$(q)$ or \textsf{Evict}$(q)$; optional on-demand call \textsf{Refresh}$(s)$}
\KwOut{updated $r(s)$ and the topic-index entry of $s$}
\KwData{topic state $\{\mathcal{M}(s),\, r(s),\, \textsf{src}(s)\in\mathcal{M}(s)\ \text{s.t.}\ r(s)=\phi(\textsf{src}(s))\}$; per-query signal $\mathrm{TSI}(q)$ (Section~\ref{sec:component-B}); topic index with \textsf{IndexUpdate}$(s,r(s))$}

\SetKwProg{Fn}{Procedure}{}{}

\Fn{\textsf{Refresh}$(s)$}{
  \If{$\mathcal{M}(s)=\emptyset$}{
    delete topic $s$ from the index\;
    \Return\;
  }
  \If{$\textsf{src}(s)=\emptyset$ \textbf{or} $\textsf{src}(s)\notin \mathcal{M}(s)$}{
    $\textsf{src}(s)\leftarrow \arg\max_{e\in\mathcal{M}(s)} \mathrm{TSI}(q)$\;
    $r(s)\leftarrow \phi(\textsf{src}(s))$\;
    \textsf{IndexUpdate}$(s, r(s))$\;
  }
}

\Fn{\textsf{OnInsert}$(s,q)$}{
  append $q$ to $\mathcal{M}(s)$\;
  compute/update $\mathrm{TSI}(q)$\;
  \If{$\textsf{src}(s)=\emptyset$ \textbf{or} $\mathrm{TSI}(q)>\mathrm{TSI}(\textsf{src}(s))$}{
    $\textsf{src}(s)\leftarrow q$\;
    $r(s)\leftarrow \phi(q)$\;
    \textsf{IndexUpdate}$(s, r(s))$\;
  }
}

\Fn{\textsf{OnEvict}$(s,q)$}{
  remove $q$ from $\mathcal{M}(s)$\;
  \If{$\mathcal{M}(s)=\emptyset$}{
    delete topic $s$ from the index\;
    \Return\;
  }
  \If{$\textsf{src}(s)=q$}{
    $\textsf{src}(s)\leftarrow \emptyset$\tcp*[r]{invalidate; refreshed lazily by \textsf{Refresh}}
  }
}
\end{algorithm}

\stitle{Anchor-based representative (invariant).}
We adopt an \emph{anchor-based} representative: at any time, $r(s)$ equals the embedding of a \emph{single} resident query in $\mathcal{M}(s)$.
Among current members, the anchor is chosen as a query with maximal $\mathrm{TSI}(\cdot)$ (Section~\ref{sec:component-B}), with deterministic tie-breaking (e.g., by recency or a fixed id order).
This invariant avoids centroid-like summaries and yields a stable routing key under churn, while the TSI-max rule biases the anchor toward the topic core and structurally necessary context.

\stitle{Topic-level retrieval and routing (coarse stage).}
Given an incoming request $q_t$, we first use the representative index to retrieve a small candidate set of topics:
\[
\mathcal{S}_t \leftarrow \textsf{IndexQuery}(\phi(q_t), K),
\]
where \textsf{IndexQuery} returns up to $K$ topics whose representatives are nearest to $\phi(q_t)$ in the embedding space.
We then apply a similarity gate and select the best passing candidate:
\[
Z_t \leftarrow \arg\max_{s\in \mathcal{S}_t:\ \mathrm{sim}(\phi(q_t), r(s))\ge \tau}\ \mathrm{sim}(\phi(q_t), r(s)),
\]
and set $Z_t=\emptyset$ if no candidate passes the gate.
Algorithm~\ref{alg:topic_retrieval} summarizes this retrieval-and-gate routine.
(New-topic creation for $Z_t=\emptyset$ is handled in the main workflow and is not repeated here.)
This stage implements coarse-to-fine retrieval: the index constrains routing to a shortlist, while the gate preserves topic purity.

\stitle{Within-topic verification (fine stage).}
After routing, we decide semantic reuse by a local search restricted to the routed topic.
Concretely, we find the most similar resident query within $\mathcal{M}(Z_t)$:
\[
q_{\mathrm{hit}} \leftarrow \arg\max_{q\in \mathcal{M}(Z_t)} \mathrm{sim}\!\big(\phi(q_t), \phi(q)\big).
\]
A semantic hit is declared if $\mathrm{sim}\!\big(\phi(q_t), \phi(q_{\mathrm{hit}})\big)\ge \tau$ (or a stricter reuse threshold if routing and reuse gates are decoupled in the implementation).
This separates \emph{coarse} candidate-topic retrieval from \emph{fine} in-topic verification, while keeping verification localized to the routed topic.

\stitle{Representative and index maintenance.}
We maintain $r(s)$ and the topic-level index online under insertions and evictions, while preserving the anchor-based invariant.
Representative updates are event-driven: $r(s)$ changes only when a newly inserted query becomes the TSI-max anchor, or when the current anchor is invalidated by eviction.
When $r(s)$ changes, we update the corresponding entry in the topic-level index to keep retrieval consistent.
Algorithm~\ref{alg:topic_index_maint} summarizes the maintenance logic.

\etitle{On insertion.}
When a new query $q$ is assigned to topic $s$, we append it to $\mathcal{M}(s)$ and compute (or refresh) $\mathrm{TSI}(q)$.
If $\mathrm{TSI}(q)$ exceeds that of the query currently realizing $r(s)$, we update $r(s)\leftarrow \phi(q)$ and update the index entry of $s$; otherwise, $r(s)$ remains unchanged.
This yields $O(1)$ representative updates on insertions.

\etitle{On eviction (lazy refresh).}
When a query is evicted from topic $s$, we remove it from $\mathcal{M}(s)$.
If the evicted query is the one realizing $r(s)$, we refresh the representative lazily:
the refresh is deferred until the next time $s$ is needed as a routing candidate (e.g., returned by \textsf{IndexQuery}) or otherwise accessed, at which time we scan $\mathcal{M}(s)$ to select a query with maximal $\mathrm{TSI}(\cdot)$ and reset $r(s)$ to its embedding, followed by an index update.
This scan is amortized since it is triggered only when the current anchor query is evicted.

\etitle{Empty-topic handling.}
If $\mathcal{M}(s)=\emptyset$, we delete topic $s$ and remove it from the topic-level index, ensuring that the index only contains representatives of active topics.

\stitle{Retrieval-oriented discussion.}
The shortlist size $K$ trades off routing recall and per-request cost, while the threshold $\tau$ trades off topic purity and fragmentation.
Compared with centroid-like summaries, an anchor-based representative is cheaper to maintain and more stable under churn; selecting the anchor by TSI further biases $r(s)$ toward core and structurally necessary context, mitigating representativeness loss under mild intra-topic variability.

\end{document}